\def\squareforqed{\hbox{\rlap{$\sqcap$}$\sqcup$}}
\def\qed{\ifmmode\squareforqed\else{\unskip\nobreak\hfil
\penalty50\hskip1em\null\nobreak\hfil\squareforqed
\parfillskip=0pt\finalhyphendemerits=0\endgraf}\fi}
\def\duzomniejsze{<\kern-.7mm<}
\def\duzowieksze{>\kern-.7mm>}
\def\textbf#1{{\bf #1}}
\def\beq{\begin{equation}}
\def\eeq{\end{equation}}
\def\be{\begin{equation}}
\def\ee{\end{equation}}
\def\ben{\begin{eqnarray}}
\def\een{\end{eqnarray}}
\def\beqa{\begin{eqnarray}}
\def\eeqa{\end{eqnarray}}
\def\eea{\end{array}}
\def\bea{\begin{array}}
\newcommand{\bei}{\begin{itemize}}
\newcommand{\eei}{\end{itemize}}
\newcommand{\bee}{\begin{enumerate}}
\newcommand{\eee}{\end{enumerate}}
\newcommand{\nc}{\newcommand}
\def\>{\rangle}
\def\<{\langle}
\newtheorem{lemma}{Lemma}
\newtheorem{cor}{Corollary}
\newtheorem{theorem}{Theorem}
\newtheorem{definition}{Definition}
\newtheorem{obs}{Observation}
\newtheorem{fact}{Fact}
\def\bed{\begin{definition}}
\def\eed{\end{definition}}
\def\bel{\begin{lemma}}
\def\eel{\end{lemma}}
\def\bet{\begin{theorem}}
\def\eet{\end{theorem}}
\def\be{\begin{equation}}
\def\ee{\end{equation}}
\begin{document}
\title{Linear game non-contextuality and Bell inequalities - a graph-theoretic approach}
\author{P. Gnaci\'nski$^{1}$, M. Rosicka$^{1,2}$,  R. Ramanathan$^{1}$, K. Horodecki$^{3}$, M. Horodecki$^{1}$, P. Horodecki$^{2}$, S. Severini$^4$}
\affiliation{$^1$
Faculty of Mathematics, Physics and Informatics, University of Gda\'{n}sk, 80-952 Gda\'{n}sk,  Institute of Theoretical Physics and Astrophysics, and
National Quantum Information Centre in Gda\'{n}sk, 81-824 Sopot, Poland}
\affiliation{$^2$Faculty of Applied Physics and Mathematics, Gda\'{n}sk University of Technology, 80-233 Gda\'{n}sk, Poland
 and National Quantum Information Centre in Gda\'{n}sk, 81-824 Sopot, Poland}
\affiliation{$^3$Faculty of Mathematics, Physics and Informatics, University of Gda\'{n}sk, 80-952 Gda\'{n}sk, Institute of Informatics, and National Quantum Information Centre in Gda\'{n}sk, 81-824 Sopot, Poland}
\affiliation{$^4$Department of Computer Science and Department of Physics and Astronomy,
University College London, WC1E 6BT London, U.K}

\begin{abstract}

We study the classical and quantum values of one- and two-party linear games, an important class of unique games that generalizes the well-known XOR games to the case of non-binary outcomes. We introduce a ``constraint graph" associated to such a game, with the constraints defining the linear game represented by an edge-coloring of the graph. We use the graph-theoretic characterization to relate the task of finding equivalent games to the notion of signed graphs and switching equivalence from graph theory. We relate the problem of computing the classical value of single-party anti-correlation XOR games to finding the edge bipartization number of a graph, which is known to be MaxSNP hard, and connect the computation of the classical value of more general XOR-d games to the identification of specific cycles in the graph. We construct an orthogonality graph of the game from the constraint graph and study its Lov\'{a}sz theta number as a general upper bound on the quantum value even in the case of single-party contextual XOR-d games. Linear games possess appealing properties for use in device-independent applications such as randomness of the local correlated outcomes in the optimal quantum strategy. We study the possibility of obtaining quantum algebraic violation of these games, and show that no finite linear game possesses the property of pseudo-telepathy leaving the frequently used chained Bell inequalities as the natural candidates for such applications. We also show this lack of pseudo-telepathy for multi-party XOR-type inequalities involving two-body correlation functions.

\end{abstract}

\maketitle
\section{Introduction}
\bibliographystyle{apsrev}

Quantum mechanics provides various resources. One of them is quantum non-locality \cite{Bell, review-nonloc}. Given the ability to perform measurements on a bipartite quantum state, one can obtain  correlations which do not have a classical explanation in that they can not be predetermined before the measurements. To ensure this, one can perform statistical tests for quantum non-locality \cite{Bell}, known as the Bell inequalities, the famous CHSH \cite{CHSH} inequality being a prominent example. The applications of non-locality go beyond quantum theory \cite{BHK, E91}, reaching as far as device-independent security against a so called non-signaling adversary - a person possibly empowered with more than quantum resources, but still obeying the no-faster-than-light communication principle \cite{PR}. Another application of quantum non-locality is to communication complexity \cite{BCMdW10}, where the use of quantum non-local correlations lowers the communication cost of evaluating a function using distributed computers.

Bell non-locality is a special case of the general phenomenon called {\it contextuality}. This phenomenon which had been discovered first by Kochen and Specker \cite{Kochen-Specker} stems from the fact that in quantum mechanics the results of the measurement of an observable may depend on the context (i.e., the particular set of commuting observables) in which it is measured.
In consequence even for a single quantum system, sometimes a measurement can be said to create the outcomes, instead of merely revealing preexisting ones. 
Quite a long history of research on contextuality has led to various non-contextuality inequalities \cite{Kochen-Specker, KCBS, Cabello, CSW, AQBCC13}, Bell inequalities being a special case. Quantum contextuality has for long been studied as a fundamental quantum property, reaching recently a connection to a resource which is required for universal quantum computing \cite{Contex-Nature,Delf-context-rebits} and quantum cryptography \cite{context-measures}.

Two-party Bell inequalities have also been studied in theoretical computer science in terms of two-prover interactive-proof systems, commonly referred to as "`games" \cite{Condon} between two players and a referee. In this formulation one can let the players pre-share quantum data (an entangled quantum state) and the use of outcomes of measurements on it can lead to a higher probability of winning the game than in the case of classical shared randomness. Yet higher success probability may be obtained, when the players are provided with a general system (device) which is only required to satisfy the no-signaling principle. In this framework, the main quantity of interest is the winning probability of the game or in general the amount of violation of a  Bell inequality. In the case of a single player, the Bell inequality becomes a non-contextuality inequality or simply
a constraint satisfaction problem (see e.g. \cite{Trevisan} and references therein). 

In general it is NP-hard to find the classical value of a general constraint satisfaction problem with many variables per constraint 
\cite{AroraLMSS98,Arora-Safra,Raz98}, so one considers special classes of games. A celebrated class of games is the so-called {\it unique games} with two players. These are games where for each pair of questions by the referee $(x,y)$ and for any answer of one player $a$ there exists only one answer of the other player $b$ which leads to winning. In other words, the winning constraints are permutations: one-to-one mappings of the answers of one player into acceptable answers of the other: $\pi_{(x,y)}(a)=b$. 
Computing the exact classical value of a unique game is known to be NP hard \cite{Hastad}. Moreover, it is conjectured, that it is even NP hard to distinguish whether a unique game has classical probability of winning almost $1$, or close to zero. This conjecture, known as the Unique Games Conjecture, has vast consequences for many questions in computer science \cite{Khot02}.
On the other hand, it is known that the quantum winning probability of the unique game can be approximated to within a constant factor in polynomial time \cite{KempeRegevToner}, in particular for a unique game with quantum value $1 - \epsilon$, one can find in polynomial time (in the number of inputs and outputs of the game) an entangled strategy which achieves value at least $1 - 6 \epsilon$ for the game. A subclass of unique games are the so-called XOR games for two players, where the players return binary answers and the winning constraint for the game only depends on the XOR of the players' answers.
Computing the classical value of even this simplest class of unique game turns out be NP hard \cite{Hastad}, however it is known from the results of \cite{Tsirelson, Cleve} that the exact quantum value of the two-party XOR game can be computed in polynomial time. It is notable that the XOR games are equivalent to correlation based Bell inequalities for two outcomes and have also been extensively studied in the physics literature \cite{CHSH, BraunsteinC1988, Mermin}. As such, virtually all applications of quantum non-locality such as in device-independent cryptography \cite{BHK, E91} or randomness generation \cite{MS} use two-player XOR games or their multi-party generalization in terms of GHZ paradoxes \cite{Mermin}.  


While XOR games have found widespread use, recently there has been much interest in developing applications of higher-dimensional entanglement \cite{Exp-high-dim, Qudit-Toffoli, Qudit-key-dist} for which Bell inequalities with more than two outcomes are naturally suited. Therefore, both for fundamental reasons as well as for these applications, the study of Bell and non-contextuality inequalities with more outcomes is crucial. In this paper, we study a natural generalization of XOR games which we call 
generalized XOR (GXOR) games or XOR-d games \cite{RAM}. Such games in case of two ternary inputs per party appeared
first in the context of experiments \cite{ZZH}, the specific example of the generalized CHSH game was studied in \cite{Buhrman, BavarianShor, RAM} and a general bound on the quantum value of XOR-d games was proposed in \cite{RAM}. 
In this paper, we introduce a graph-theoretic characterization of these games, and apply it to the problem of finding the maximal classical and quantum values of such games.

The paper is organized as follows. The section \ref{sec:XOR-and-beyond} introduces the graph-theoretic formulation of XOR games and the expression of the game value using graph-theoretic invariants involving edge labeling. 
We then describe an axiomatic generalization of the XOR games in terms of two properties and show that the previously defined class of linear games \cite{Hastad} is the unique class which satisfies these properties. We subsequently establish the graph-theoretic characterization of the subset of XOR-d games and illustrate this with the example of games with ternary outputs.
We then describe one of our results in section \ref{sec:labeled-eq}, where we use the graph-theoretic formalism established in previous sections to identify when two games can be considered equivalent, in particular we establish a relation to the graph-theoretic notion of signed graphs and switching equivalence. 
Then in section \ref{sec:bin-xor} we study the classical value of these generalized XOR-d games in a graph-theoretical manner. Our results in this section include a characterization of the complexity (as MaxSNP-hard) of computing the classical value of the simplest class of XOR games, namely single-party anti-correlation games.  
In the next section \ref{sec:Lovas}, we study the quantum value of these games, in particular we establish that the well-known Lov\'{a}sz theta number of the orthogonality graph of a contextuality game only gives an upper bound to its quantum value, unlike in the previously considered scenario of non-contextuality inequalities involving rank-one projectors. XOR-d games have the important property that their optimal quantum strategies involve locally random and correlated outcomes, thus permitting them to be ideal candidates for device-independent applications. In section \ref{sec:DIapp}, we prove that no non-trivial finite XOR-d game for prime $d$ can be perfectly won with a quantum strategy, thus providing evidence that the frequently used chained Bell inequalities might indeed be the best candidates for such applications. We also extend the result to multi-party "partial" XOR games which involve only two-body correlation functions, showing that such Bell inequalities cannot achieve algebraic violation. 
The final section \ref{sec:Comparing-cl-qn} is devoted to a numerical analysis of the classical and quantum values (using semi-definite programming) of games with upto three ternary inputs per party. We end with conclusions and some open problems.
%
\section{Graph-theoretic formulation of generalized XOR games}
\label{sec:XOR-and-beyond}
The aim of this section is to introduce the Generalized XOR games in a graph theoretical manner. 
In order to do it, let us first recall a formulation of binary outcome XOR games in terms of graphs with two types of edges corresponding to correlated and anti-correlated answers in section \ref{subsec:XORgames}. Specifically, the constraints will be represented by two differently labeled edges on a graph with vertices representing the questions to the players so that the graph is a bipartite graph. We then define the main objects of study - the winning probabilities of a game given
classical, quantum and super-quantum resources respectively. 
In section \ref{subsec:GENXORgames}, we define the generalized XOR (XOR-d) games and establish their graph-theoretic formulation. The constraints of the game are represented by colored edges (with more than two colors), we illustrate this with the example of games with ternary answers. 
We then use the graph-theoretic formulation to also represent a single player contextuality game. This is simply a {\it constraint satisfaction problem}: the constraints of the game still being represented by colored edges, but with no bi-partition on the vertices. 

\subsection{XOR games}
\label{subsec:XORgames}

The XOR game involves a referee and two players: Alice and Bob. The referee asks questions $x \in \textsl{X}$ to Alice and $y \in \textsl{Y}$ to Bob according to an input probability distribution $\pi(x,y)$. Each player has two possible answers $a, b\in\{0,1\}$ respectively. Whether the players win or lose depends solely on the XOR of their outputs: $a\oplus b$, where $\oplus$ denotes addition modulo $2$. To give an example, in the famous CHSH game, the players win if $a\oplus b = x \cdot y$, i.e., when the $XOR$ of their answers equals the $AND$ of the questions, with $a,b,x,y$ being binary. XOR games are equivalent to correlation Bell inequalities with binary outcomes, since the correlation functions $\mathcal{E}_{x,y}$ are simply given by $\mathcal{E}_{x,y} = \sum_{k=0,1} (-1)^k P(a \oplus b = k | x, y)$.

In the game, the players can have access to certain resources. Three types of resources are usually considered. The first are classical corresponding to shared randomness between the players. The second are quantum, i.e., access to a bipartite entangled quantum state, and the set of measurements that can be performed on it by each player. And finally one also considers super-quantum resources, which correspond to access to a general device with inputs and outputs with the only constraint being that the device does not allow for signaling between the players. All the three resources have a common mathematical formulation as a conditional probability distribution $P(a,b|x,y)$ from a certain set: classical ($C$), quantum ($Q$) and super-quantum $(SQ)$, and in general $C \subset Q \subset SQ$. The no-signaling condition is expressed mathematically as
\ben
\sum_a P(a,b|x,y) &=& \sum_a P(a,b|x',y) \; \; \forall_{x,x',y,b}\, \nonumber\\
\sum_b P(a,b|x,y) &=& \sum_b P(a,b|x,y') \; \; \forall_{y,y',x,a}\,,
\label{eq:nscondition}
\een
in other words, the conditional probability distribution of each player is independent of the other party's input. 
The main object of study in XOR games is the winning probability of the players, which is written as:
\ben
\label{eq:xor-game}
\omega_S(G) =  \max_{P \in S} \sum_{\substack{x \in \textsl{X},\\ y \in \textsl{Y}} }\pi(x,y)\sum_{a,b \in \{0,1\}}V(a,b|x,y)P(a,b|x,y)
\een
where $S \in \{ C, Q, SQ\}$ and $V(a,b|xy)$ is the indicator function reporting if the answers are correct (for XOR games $V(a,b|x,y)$ only depends on $a \oplus b$). For example, in the case of the CHSH game 
$V(a,b|x,y) = 1$ if $a\oplus b = x \cdot y$ and is set to $0$ otherwise. 
The three quantities are accordingly called the classical, quantum and super-quantum value of the game. 


\subsubsection{Graph-theoretic formulation of XOR games}
We are now ready to present the formulation of XOR games in graph-theoretic terms.
An XOR game is represented by a 
graph $G$ with a specific edge-labeling that denotes the winning constraint of the game. The inputs, i.e., the questions asked by the referee, are represented by the vertices of a graph $G$. Two inputs are adjacent in the graph (i.e., connected by an edge) if and only if the corresponding measurements can be performed simultaneously. The winning constraint $V(a,b|x,y)$ in Eq. (\ref{eq:xor-game}) is represented by two types of edges - a solid edge corresponding to $a\oplus b=0$ (perfect correlations between the players) and a dashed edge corresponding to $a\oplus b=1$ (perfect anti-correlations between the players) that connect the inputs $x$ and $y$. A Bell inequality is thus represented by a bipartite graph (with the bi-partition corresponding to the two players). 



Every XOR game is a {\it unique} game i.e. for every pair of questions $(x,y)$ and an answer of one player $a$ there is a unique answer $b$ of the second player that leads to winning. For this reason, we can also depict the two kinds of correlations as permutations of the set of outcomes. Correlations are denoted by the identity $\mathbb{I}$ (i.e. $\pi(a) = a)$) and anti-correlations by the transposition $(01)$ (i.e. $\pi(a) = a \oplus 1 \; \text{mod 2}$) (see Fig \ref{fig:2permutacje}). We can formally define this as a labeling $K:E(G)\mapsto \{\mathbb{I}, (01)\}$ of the edges of graph $G$. For an XOR game depicted by a graph $G$ with an edge-labeling $K$, we use $ \omega_{S}(G, K)$ to denote the winning probability of the game using the resource set $S$ under a uniform input distribution.

\begin{center}
\begin{figure}[h!]
		\includegraphics[width=0.4\textwidth]{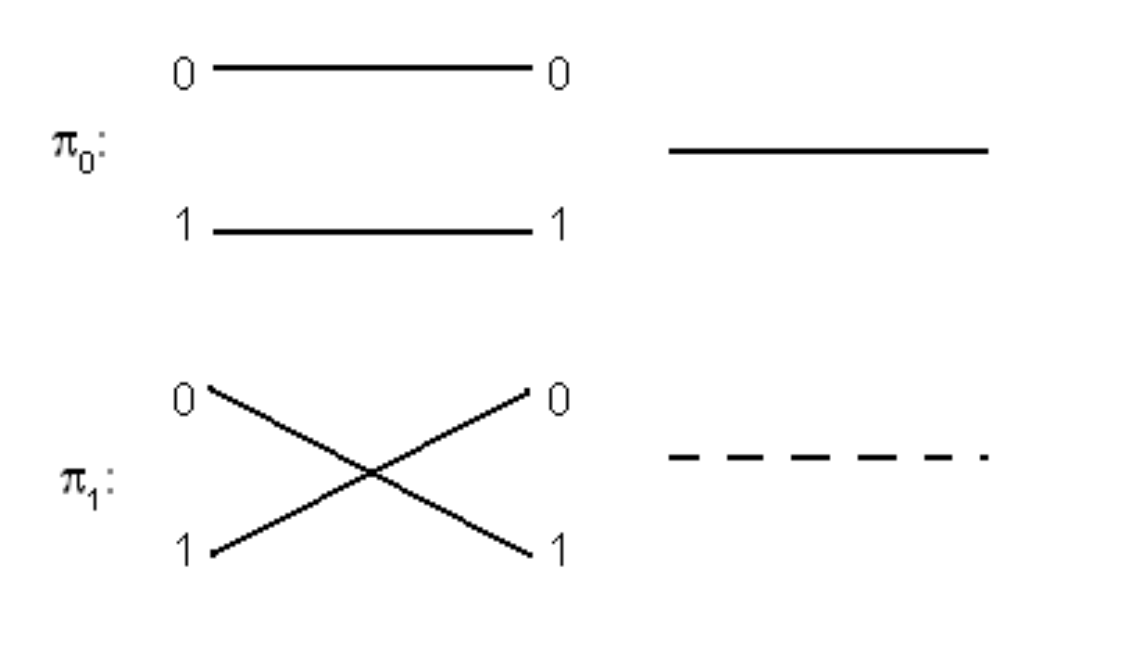}
\caption{(Color online) Two permutations involved in graphs of XOR games: $\pi_0=\mathbb{I}$ denotes correlations and $\pi_1=(01)$ denotes anti-correlation of the (binary) outputs.}
	\label{fig:2permutacje}
\end{figure}
\end{center}

\subsection{Generalized XOR (XOR-d) Games}
\label{subsec:GENXORgames}
In the generalization of an XOR games to games with $d$ outcomes, we abstract two properties of the XOR game: we require a set of $d$ permutations of $[d] :=\{0,1,...,d-1\}$ to describe the possible winning constraints in the game and impose that these permutations satisfy two salient properties:
%

\begin{itemize}
\item (P1) Each permutation is symmetric with respect to exchange of players, i.e. the permutations are their own inverse. 

\item (P2) Every pair $(a,\pi(a))$ appears exactly once in the set of permutations (in particular, each  permutation assigns a different $\pi(a)$ for each given $a\in [d]$.)
\end{itemize}

For instance, observe that the following set of permutations satisfies the above properties. For each answer $a$ of Alice, consider an answer of Bob as $b = \pi_i(a)$ where $\pi_i $  satisfies relation:
\begin{equation}
\label{eq:def-gxor}
\pi_i(a) + a = i \mod d
\end{equation}
for each $i \in [d]$ where $d$ is the number of possible answers for both players. Thus all permutations $\pi_i$ belong to the set 
\begin{equation}
\label{eq:def-gxor2}
L_d=\{\pi_i\in S_d: \pi_i(x)=i-x \; \text{mod d} \; \; \,\,\forall\,\, i,x \in [d]\}.
\end{equation}
where $S_d$ is the set of permutations of the set $[d]$. Note that these games belong to the class of linear games studied in \cite{Hastad, RAM} where the answers $a, b$ of the parties are required to obey $a + b \; \text{ mod d} = f(x,y)$ for a given set of functions $f(x,y)$ that characterize the game. 

Let us now prove that (up to local relabeling of answers) for odd $d$, up to local relabeling of answers the above set of permutations in Eq. (\ref{eq:def-gxor2}) is the only one which satisfies the two properties above, i.e., that the two properties (P1) and (P2) completely characterize the XOR-d game. For even $d$, this is no longer the case. Note that the prime $d$ corresponds to the case where the operation a + b mod d in Eq.(\ref{eq:def-gxor}) is the addition in a finite field $\mathbb{F}_d$, which is the interesting case of linear games studied for example in \cite{Hastad}. 

\begin{theorem}
For odd $d$, up to local relabeling of answers by the parties, the only games which satisfy the properties $P_1$ and $P_2$ are those given by Eq. (\ref{eq:def-gxor}).  
\end{theorem}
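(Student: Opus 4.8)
The plan is to read the unknown game as a set of $d$ involutions $\{\pi_0,\dots,\pi_{d-1}\}$ on $[d]$, translate (P1) and (P2) into structural statements about fixed points, and then try to upgrade the resulting combinatorial data to the additive structure of $\mathbb{Z}_d$. Here (P1) says each $\pi_i$ is an involution, and (P2) says that for every ordered pair $(a,b)$ there is exactly one index $i$ with $\pi_i(a)=b$, i.e. the set is sharply transitive.

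First I would analyse fixed points, which is where oddness of $d$ enters cleanly. Taking $b=a$ in (P2), each $a\in[d]$ is fixed by exactly one permutation, so the fixed-point sets $F_i=\{a:\pi_i(a)=a\}$ partition $[d]$. On the other hand an involution moves an even number of points, so for odd $d$ every $\pi_i$ fixes an odd number of points, whence $|F_i|\ge 1$. Since the $d$ sets $F_i$ are disjoint, non-empty, and their sizes sum to $d$, each $|F_i|=1$: every $\pi_i$ has a unique fixed point $c_i$, and $i\mapsto c_i$ is a bijection of $[d]$. Re-indexing by the fixed point, I write $\sigma_c$ for the unique involution fixing $c$. (This is exactly the step that collapses for even $d$, where a fixed-point-free involution is allowed, matching the remark that the uniqueness statement then fails.)

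Next I would package the data as a binary operation. For $a\ne b$ let $m(a,b)$ be the unique $c$ with $\sigma_c(a)=b$; since $\sigma_c$ is an involution it swaps $a$ and $b$, so $m$ is commutative, and $m(a,a)=a$ by the fixed-point count. Sharp transitivity makes $b\mapsto m(a,b)$ a bijection for each fixed $a$, so $([d],m)$ is a commutative idempotent quasigroup and the $\sigma_c$ are its associated point-reflections. The target game $L_d$ is the case $\sigma_c(x)=2c-x$, i.e. the midpoint operation $m(a,b)=(a+b)\,2^{-1}\bmod d$, which is well defined because $2$ is invertible modulo odd $d$. So the claim to be proved becomes: after a single relabelling $\sigma_c\mapsto \rho\,\sigma_c\,\rho^{-1}$ of the answers, this quasigroup is the midpoint operation of $\mathbb{Z}_d$.

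The hard part will be to pass from the quasigroup to the group. My approach would be to introduce the translations: fixing a base point $0$, set $T=\{t_c:=\sigma_0\sigma_c\}$. Sharp transitivity readily gives that $t_0$ is the identity, that each $t_c$ with $c\neq0$ is fixed-point-free, and that $T$ is itself sharply transitive. If one can show that $T$ is \emph{closed under composition}, then $T$ is a sharply transitive group acting regularly, and the involution relations $\sigma_0 t_c\sigma_0=t_c^{-1}$ identify each $\sigma_c$ as a reflection of this regular action, so that relabelling by an orbit coordinate recovers $L_d$; for odd (resp. prime) $d$ one further checks the regular group is cyclic $\mathbb{Z}_d$. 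The genuine obstacle is precisely this closure: concretely one must show $\sigma_c\sigma_0\sigma_{c'}\in\{\sigma_e:e\in[d]\}$ (the identity $\sigma_c\sigma_0\sigma_{c'}=\sigma_{c+c'}$ holding automatically only once the midpoint structure is in place). The fixed-point picture above is forced by (P1) and (P2) alone, but the triple products collapsing back into the set is exactly the extra algebraic input — the additive/field structure singled out for prime $d$ — that the argument must supply, and I expect this to be the crux of the proof.
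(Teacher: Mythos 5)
Your opening analysis is correct and coincides with the first half of the paper's own proof: from (P1) each permutation is an involution, and from (P2) together with the parity argument for odd $d$ you conclude that each of the $d$ permutations has exactly one fixed point and that these fixed points are distinct. Your derivation of this is in fact cleaner than the paper's case analysis. From there, however, the two arguments part ways, and yours stops short of a proof. The paper finishes with a counting argument: it computes the number $M_d$ of involutions of $[d]$ with exactly one fixed point and argues that the permutations arising in local relabelings of $L_d$ exhaust this count. You instead encode the game as a commutative idempotent quasigroup and reduce the theorem to the statement that the translations $t_c=\sigma_0\sigma_c$ are closed under composition --- and then you explicitly defer that closure. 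Since everything you do prove (the fixed-point structure, the quasigroup axioms, the elementary properties of $T$) is routine, the deferred closure is not a technical loose end; it is the entire content of the theorem, and the proposal as written does not establish it.

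Moreover, this gap cannot be closed from (P1) and (P2) alone, so no completion of your plan along the stated lines can succeed. For $d=7$, take the Fano plane on $\{1,\dots,7\}$ with triples $\{1,2,3\}$, $\{1,4,5\}$, $\{1,6,7\}$, $\{2,4,6\}$, $\{2,5,7\}$, $\{3,4,7\}$, $\{3,5,6\}$, and let $\sigma_c$ fix $c$ and exchange the two other points of every triple through $c$; e.g. $\sigma_1=(2\,3)(4\,5)(6\,7)$ and $\sigma_2=(1\,3)(4\,6)(5\,7)$. Each $\sigma_c$ is an involution (P1), and $\sigma_c(a)=b$ holds iff $\{a,b,c\}$ is a triple, so every ordered pair $(a,b)$ arises from exactly one $\sigma_c$ (P2). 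Yet $\sigma_1\sigma_2=(1\,2\,3)(4\,7)(5\,6)$ has order $6$, whereas in any relabeling $\{\Pi\pi_i\Pi^{-1}\}$ of $L_7$ the product of two distinct elements is conjugate to a translation $x\mapsto x+(i-j)$, $i\neq j$, of order $7$ (the same conclusion holds even if the two parties relabel differently, by considering $s_is_j^{-1}$). So this game satisfies (P1)--(P2) but is not equivalent to $L_7$: your set $T$ is genuinely not closed here, and the associated quasigroup obeys the Steiner law $m(m(a,b),b)=a$, which the midpoint quasigroup of $\mathbb{Z}_7$ violates. Separately, for $d=9$ the reflections $\sigma_c(x)=c-x$ over $\mathbb{Z}_3\times\mathbb{Z}_3$ satisfy (P1)--(P2); there $T$ is closed but is the non-cyclic group $\mathbb{Z}_3\times\mathbb{Z}_3$, so the final step of your plan (``the regular group is cyclic for odd $d$'') fails as well --- one obtains a linear game over a different abelian group. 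In short, your instinct about where the crux lies was exactly right, so much so that the crux is false; note that this also exposes the weakness of the paper's counting step, which counts individual involutions rather than admissible \emph{sets} of involutions and therefore cannot distinguish such games from relabelings of $L_d$.
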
 

\begin{proof}
The permutations that obey $P_1$ clearly have cycles of length at most two, i.e., they consist of fixed points and transpositions only. Let us first note that a permutation consisting of an even number of fixed points cannot be part of the set of permutations considered, because the permutations consists only of transpositions besides the fixed points. Also, a permutation consisting of an odd (greater than one) number of fixed points cannot be part of the set of permutations considered. This is because of the requirement that there be $d$ permutations in the set and each permutation consists of at least one fixed point due to the previous considerations, so that having a permutation with more than one fixed point in the set leads to a contradiction with $P_2$. We therefore see that each permutation in the set of $d$ permutations contains exactly one (distinct) fixed point. 

Now, the number of permutations $M_d$ of a set of $d$ objects (with $d$ odd) consisting of only transpositions and exactly one fixed point is given by
\begin{equation}
\label{eq:trans-fix}
M_d =  \frac{d}{\left(\frac{d-1}{2} \right)!} \prod_{j=0}^{(d-3)/2} \binom{d-2j-1}{2}
\end{equation}
Now, the $d$ permutation set $L_d$ defined by Eq.(\ref{eq:def-gxor2}) clearly obeys $P_1$ and $P_2$. Also, any set of $d$ permutations obtained from $L_d$ by a local relabeling, i.e., applying a permutation $\Pi \in S_d$ to each element $a$ and $\pi_i(a)$ for all $i = 0,...,d-1$ and $\pi_i \in L_d$ also obeys $P_1$ and $P_2$. The number of permutations obtained by this operation is $d! \times d$ but this involves over counting since many of the permutations thus obtained are equivalent. A precise counting argument shows that the exact number of permutations obtained is given by
\begin{equation}
\frac{d!}{2^{(d-1)/2} \times \left(\frac{d-1}{2} \right)!}
\end{equation}  
which after some algebra is seen to be exactly equivalent to $M_d$ in Eq.(\ref{eq:trans-fix}). 
\end{proof}


The generalized XOR-d game is represented by a graph $G$ in analogous fashion to the XOR game. Namely, the vertices of the graph represent the inputs in the game and an edge between two vertices denotes that the corresponding measurements can be performed simultaneously. In the graph-theoretic representation of XOR-d games, we will use the notion of ``colors" to denote the edge-labelings that represent the winning constraints (permutations) in the game. We now also see the effect of the properties (P1) and (P2) characterizing the XOR-d game. While the graph-theoretic approach can also be applied to general unique games, most non-linear games have to be represented by a directed graph, as the permutations defining the winning constraints need not be their own inverse. 
In Fig. \ref{fig:3colors}, we show an example of a game for a ternary output game with three possible winning permutations: red corresponding to $\pi_0$, blue corresponding to $\pi_1$ and green corresponding to $\pi_2$.

\begin{center}
\begin{figure}[h!]
		\includegraphics[width=0.4\textwidth]{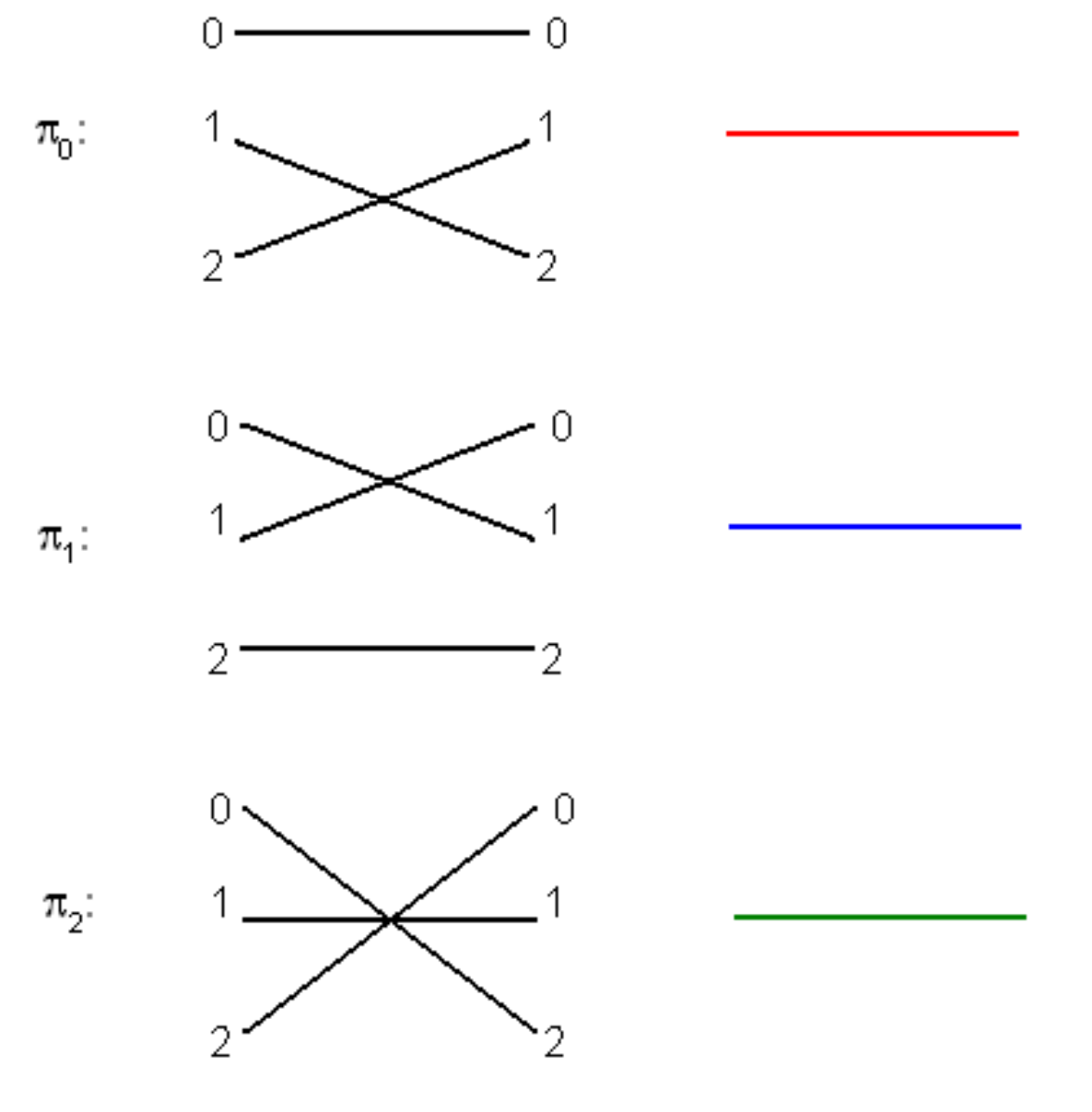}
\caption{(Color online) Definition of 3 permutations of a GXOR game with $d=3$ and corresponding colors. In later figures the colors will always denote the same permutations.}
	\label{fig:3colors}
\end{figure}
\end{center}

Note that the above formulation also naturally encompasses games with a single player, i.e., non-contextuality inequalities. In this case, the game scenario is simply a constraint satisfaction problem and is represented by a simple graph that is no longer constrained to be bipartite. The vertices still correspond to questions by the referee and the edge-labeling $K : E(G) \mapsto S_d$ denotes the permutations defining the winning constraints of the game. 
The value of the single-player game (for a uniform input distribution) is simply,
\be
\omega_{S}(G,K) = \max_{P \in S} {1\over |E(G)|} \sum_{\{x,y\} \in E } V(a,b|x,y) P(a,b|x,y) 
\label{eq:def-of-pmax}
\ee
where $V(a,b|x,y) = 1$ iff $\pi_{(xy)}(a) = b$ and is $0$ otherwise, $|E|$ is the number of edges in the graph and $S$ is the classical, quantum or super-quantum set of boxes. It is worth noting that in the single-party scenario, a set of conditional probability distributions (box) $P(a,b|x,y)$ is {\it quantum} if it has the form $P(a,b|x,y) = Tr(\rho P_x^a Q_y^b)$ where $\rho$ is a quantum state and $P_x^a, Q_y^b$ are projection operators such that if $(x,y) \in E(G),$ then the commutator vanishes i.e., $[P_x^a, Q_y^b] = 0$. 
 

\paragraph{Classical value}
A well-known convexity argument shows that the optimal classical value of the game is obtained when the outcomes $a, b$ are assigned to the inputs $x, y$ in a deterministic manner. In terms of graphs, this can be formally described as follows.
Consider the assignment of deterministic values $f(x)$ in $\{0,...,d-1\}$ to each vertex $x$ of the graph $G$. If for some edge $e= (xy)$ of $G$ one has $\pi_e (f(x)) \neq f(y)$ i.e., if the values of the assignment do not satisfy the winning constraint defined by the color (permutation) associated with the edge, we say that there is a {\it contradiction}. Then the minimal number of contradictions over all deterministic vertex assignments for a graph $G$ with a given edge-labeling $K:E(G)\mapsto S_d$ is denoted as $\beta_C(G,K)$. This quantity characterizes the classical value of the game:
\be 
\omega_{C}(G,K) = 1-\frac{\beta_C(G,K)}{\left|E(G)\right|}
\ee

\paragraph{Super-Quantum Value}
Super-quantum is a set of all conditional probability distributions (referred to also as behaviors or boxes) $P(a,b|x,y)$ which are {\it consistent}, i.e., they satisfy the criterion that the marginal distribution $P(a|x)$ is consistenly defined for each vertex $x$ of the graph in a manner independent of the context (clique of the graph) in which it appears.  
%
In the case of a bipartite graph $G$ with the bi-partition of $V(G)$ being the set of inputs of the two parties, the above consistency condition is nothing but the no-signaling condition given in Eq (\ref{eq:nscondition}). 
With super-quantum resources, for any graph $G$ and edge labeling $K$, one readily gets that $\omega_{SQ}(G,K) = 1$. To see this, consider a behavior $P(a,b|x,y)$ satisfying
\be
\label{eq:sq-win-strat}
P(a,b|x,y)  = P(a,\pi_{e} (a)) = {1\over d}
\ee
for all edges $e= (x, y) \in E(G)$ i.e. the maximally correlated distribution (according to the permutation $\pi_e$) over all outcomes at the edge.
Then, by definition all constraints are satisfied with probability $1$, hence $\omega_{SQ}(G,K) = 1$ as desired. Moreover, since the marginal distribution at each vertex for the above strategy is simply given by $P(a|x) = \frac{1}{d}$, we have that Eq. (\ref{eq:sq-win-strat}) is a well-defined super-quantum box. 

\subsection{XOR-d games for partial functions}
\label{subsec:GOXR-uncolored}
In this section, we consider the possibility of XOR-d games corresponding to partial functions $f(x,y)$, i.e., where the winning constraints are only defined for a subset of input pairs $(x,y)$. We incorporate this in the graph-theoretic formulation by simply allowing the edge-labeling to leave some edges uncolored. However, since the measurements corresponding to the two vertices in the uncolored edge might still be required to commute, we depict these as gray edges. An important example where such edges naturally arise is the Braunstein-Caves Chained Bell inequality \cite{BraunsteinC1988}. This inequality concerns a game with $N^2$ inputs which has numbers from the set $\{0,2,...,2N-2\}$ for Alice and from the set $\{1,3,...,2N-1\}$ for Bob. However, the winning constraints are only defined for $2N$ neighboring pairs $\{(k,k+1 \; \text{mod 2 N}) : k \in \{0,...,2N-1\} \}$ and only these enter the chained Bell expression. The corresponding graph has $2N$ edges forming a cycle. But all of Alice's measurements commute with all of Bob's measurements, so that the additional gray edges are added. This distinguishes the chained Bell inequality in the two-party scenario from the $2N$ cycle contextuality game \cite{AQBCC13} which is simply depicted by the cyclic graph $C_{2N}$. 
%

In the partial function XOR-d game, we have a sub-graph $G' = (V, E')$ of $G$ with a labeling $K':E'\mapsto L_d,$ where $E'\subset E(G)$. The gray edges $(x,y) \in E(G)$ denote that the observables represented by the vertices $x$ and $y$ must commute, but they do not have to satisfy any other constraints. The success probability in the game is thus given as
\be
\omega_{S}(G',K') = \max_{P \in S} {1\over |E'|} \sum_{ (x,y) \in E' } V(a,b|x,y) P(a,b|x,y) 
\ee
Clearly, in the classical case the minimum number of contradictions for a given $G=(V,E)$ and $K:E'\mapsto L_d$ is equal to $\beta_C(G',K)$ and thus $\omega_{win}(G,K)=\omega_{win}(G',K).$ This is not necessarily true for the quantum case, since vertices connected by a gray edge still have to commute. Nevertheless, we have the following straightforward general dependencies.
If $K:E\mapsto L_d$ is any edge-labeling of $G$ such that $K(e)=K'(e)$ for any edge $e\notin E',$ the following inequalities are true:
\ben 
\gamma_{C}(G,K)-\left|E-E'\right| &\leq & \gamma_{C}(G,K')=\gamma_{C}(G',K'),\, \nonumber \\
\gamma_{Q}(G,K')-\left|E-E'\right| & \leq & \gamma_{Q}(G,K)\leq \gamma_{Q}(G',K), \nonumber \\
\een
where $\gamma_{C}$ and $\gamma_{Q}$ denote un-normalized classical and quantum values, that is $\omega_{S}(G,K)\left|E(G)\right|$ with $\omega_S(G,K)$ defined in equation (\ref{eq:def-of-pmax}), respectively.

\begin{figure}[h]
	\includegraphics[width=0.40\textwidth]{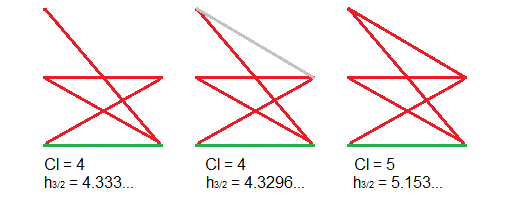}
	\caption{Example: Un-normalized classical and $\gamma_{3/2}$ values for graphs (a) $(G',K')$ - without the edge $e$, (b) $(G,K')$ - with an uncolored (gray) edge $e$ and (c) $(G,K)$ - with a colored $e$. The graphs (a), (b) show that while gray edges do not affect the classical value, they can potentially affect the quantum value of a game. }
	\label{fig:legs}
\end{figure}

\section{Equivalent Games}
\label{sec:labeled-eq}
In this section, we use the graph-theoretic approach to find non-equivalent games both in the single- and two-party scenario. Two games are equivalent when they can be transformed into each other by operations which do not change the winning probability, i.e., $\omega_c(G)$ and $\omega_q(G)$ are equal for these games. The operation transforms the edge labeling of one game graph into that of the other. 

\subsection{Equivalent XOR games using signed graphs}
The XOR game graphs are in fact equivalent to the well-known class of signed graphs \cite{Har}, i.e., graphs with 'positive' and 'negative' edges.  
Positive edges correspond to edges labeled with identity (correlations) and negative to edges labeled with $(01)$ (anti-correlations). Signed graphs are much studied in literature due to their extensive use in modeling social processes \cite{Roberts} and also because of their interesting connections with classical mathematical systems \cite{Zaslavsky}. A cycle in a signed graph is said to be \textit{balanced} if it contains an even number of negative edges, a signed graph itself is said to be balanced if all of its cycles are balanced. 


A marking of a signed graph is a function $\mu: V(G) \rightarrow \{+,-\}$. Switching $(G, K)$ with respect to a marking $\mu$ is the operation of changing the sign of every edge label of $G$ to its opposite whenever its end vertices are of opposite signs.
Formally, we have that equivalent XOR games correspond to switching equivalent signed graphs. Switching equivalent signed graphs $(G_1,K_1)$ and $(G_2,K_2)$ are cycle isomorphic , i.e., there exists an isomorphism $\phi: G_1 \rightarrow G_2$ such that the sign of every cycle $Z$ in $(G_1, K_1)$ equals the sign of $\phi(Z)$ in $(G_2, K_2)$ \cite{Zaslavsky}. 

\subsection{Equivalent XOR-d games: Labeled graph equivalence}

We now generalize the notion of signed graph equivalence from \cite{Har} to find equivalent XOR-d games both in the single- and two-party scenarios. 
We consider two labeled graphs $(G_1,K_1)$ and $(G_2,K_2)$ to be equivalent if one can be obtained from the other by isomorphism between $G_1$ and $G_2$ and switching operations $s(v,\sigma)$ which we define below. In terms of games, switchings correspond to local operations such as relabeling of outputs by the players. 

For any graph $G$ and edge-labeling $K:E(G)\mapsto S_d$, let $v\in V(G)$ be any vertex of $G$ and let $\sigma$ be any permutation of $[d].$ For every edge $e$ incident to $v$ we change color (i.e. permutation) $\pi$ of the edge $e$ into $\pi\sigma$ where $\sigma$ is some permutation, which we will specify later.
Such a change defines a new edge-labeling $\hat{K}$ as follows:
\ben
\forall e \in E(v)\,\, \hat{K}(e)= \pi\sigma \nonumber\\
\forall e \notin E(v)\,\, \hat{K} \equiv K.
\een
where $E(v)$ is the set of edges incident with vertex $v$, that is of the form $e=\{v,u\}$ for some $u\in E(G)$.

\begin{figure}[h]
	\includegraphics[width=0.40\textwidth]{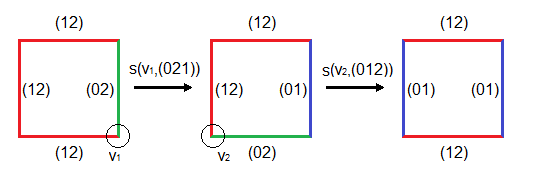}
	\caption{The third graph is obtained from the first by two switches. Thus, they are equivalent.}
	\label{fig:switch2}
\end{figure}

Note that the above operation applies not only to XOR-d games, but to all unique games. 
In fact, labeled graphs representing some non-linear unique games may be equivalent to some XOR-d games. 
If we wish to obtain only XOR-d games equivalent to a given XOR-d game, we have to limit the permutations $\sigma$ used in the switching operations to the set of such permutations $\sigma\in S_d$ that $\sigma\pi\in L_d$ for all $\pi\in L_d$. Since for every $\pi_i,\pi_j\in L_d$ there exists a permutation $\sigma_k\in L_d'=\{\sigma_i:\sigma_i(x)=i+x \mod d\}$ such that $\sigma_k\pi_i=\pi_j,$ we can obtain all XOR-d games equivalent to a given XOR-d game using only switches with permutations from the set $L_d'.$ In fact, since every $\sigma_i\in L_d'$ is equal to $\sigma_1^i,$ where $\sigma_1(x)=x+1,$ we only need to consider $\sigma_1$ multiple times for the same vertex. For example in the case of a XOR-d game with $d=3,$ i.e., a graph labeled with three colors (i.e. $L_3=\{(12),(02),(01)\}\subset S_3$) all XOR-3 games equivalent to it can be obtained via $s(v,(012))$ applied (multiple times) for each $v\in V$, and their automorphic copies. The above notion can also be extended to include graphs with uncolored edges. In this case the switching operation $s(v,\sigma)$ changes the color $K(e)=\pi$ with $\sigma\pi$ for all colored edges incident to $v$, while uncolored edges remain unaffected. It is easy to see that the equivalence still preserves all relevant properties of the game.

\section{Classical value}
\label{sec:bin-xor}

The graph theoretic approach is also useful for studying the classical values of XOR, XOR-d and other unique games. 
As we have seen, the classical value of an XOR-d game (for both Bell and non-contextuality inequalities) defined by a graph $G$ with edge-labeling $K$ obeys
\be 
\omega_{C}(G,K) = 1-\frac{\beta_C(G,K)}{\left|E(G)\right|}
\ee
where $\beta_C$ denotes the minimum number of contradictions over all deterministic vertex-assignments. To study this number we will use, in particular, a graph constructed from the graph $G$ and edge-coloring $K$ which we simply call $KG$.

\subsection{XOR games}
We can characterize the contradiction number (and hence the classical value) of a general XOR game (with one or two parties) in a graph-theoretic manner as follows.
\begin{theorem}
$\beta_C(G,K)$ is equal to the minimal number of edges which need to be removed from $G$ so that the resulting graph does not contain any cycle with an odd number of dashed $(01)$ edges.
\end {theorem}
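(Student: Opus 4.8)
The plan is to read the XOR game graph $(G,K)$ as a signed graph, with solid ($\mathbb{I}$) edges playing the role of positive edges and dashed ($(01)$) edges that of negative edges, and to observe that a deterministic assignment $f:V(G)\to\{0,1\}$ produces a \emph{contradiction} on an edge $e=\{x,y\}$ with label $\ell_e\in\{0,1\}$ (where $0$ is solid and $1$ is dashed) exactly when $f(x)\oplus f(y)\neq \ell_e$, i.e. exactly when $e$ is \emph{unsatisfied} by the two-colouring $f$. Under this dictionary, the quantity in the statement — the minimal number of edges whose removal destroys every cycle carrying an odd number of dashed edges — is precisely the minimal number of edges whose deletion makes the signed graph balanced (its frustration index). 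Writing $D$ for this deletion number, the theorem asserts $\beta_C(G,K)=D$, and I would prove the two inequalities separately.

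For $D\le \beta_C(G,K)$, I would take an optimal assignment $f$ realizing $\beta_C(G,K)$ contradictions, let $F$ be its set of contradicted edges, and delete $F$. In $G\setminus F$ every edge is satisfied, so for any cycle $Z$ of $G\setminus F$ one has $\sum_{\{x,y\}\in Z}\ell_{\{x,y\}}=\sum_{\{x,y\}\in Z}\big(f(x)\oplus f(y)\big)\equiv 0 \pmod 2$, the right-hand sum telescoping because each vertex of the cycle is incident to exactly two of its edges. Hence $Z$ carries an even number of dashed edges, so $G\setminus F$ has no odd-dashed cycle; removing the $|F|=\beta_C(G,K)$ edges therefore suffices, giving $D\le\beta_C(G,K)$.

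For the reverse inequality $\beta_C(G,K)\le D$, I would start from a minimum edge set $F$, $|F|=D$, whose removal leaves a graph with no odd-dashed cycle — that is, a balanced signed graph. Here I would invoke the structural-balance criterion already available via \cite{Har,Zaslavsky}: a balanced signed graph admits a vertex marking $f:V\to\{0,1\}$ satisfying every edge, built component-by-component on a spanning tree of $G\setminus F$ by fixing the root arbitrarily and propagating along tree edges so each tree edge is satisfied, with balance guaranteeing that the remaining chords of $G\setminus F$ are then automatically satisfied. Reading this $f$ back on all of $G$, it can be contradicted only on edges of $F$, so it yields at most $|F|=D$ contradictions, whence $\beta_C(G,K)\le D$. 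Combining the two inequalities gives the claimed equality.

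The main obstacle is the converse direction of the balance criterion, namely that balance forces the existence of a consistent $\{0,1\}$-marking: the forward parity computation in the second paragraph is immediate, but extracting the satisfying assignment requires the spanning-tree construction and the verification that balance rules out any inconsistency on the chords. Since the paper has already introduced signed graphs and switching and cites Harary and Zaslavsky, I would cite this balance theorem rather than reprove it, leaving the argument as the clean two-inequality bookkeeping above.
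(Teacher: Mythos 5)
Your proof is correct and takes essentially the same route as the paper: both read $(G,K)$ as a signed graph and reduce the statement to Harary's balance criterion (the paper's Fact~1 from \cite{Har}), with the parity/telescoping computation supplying the direction that an unbalanced cycle forces a contradiction. Your write-up merely makes explicit the two-inequality bookkeeping that the paper compresses into the remark that the claim ``follows directly'' from these two ingredients.
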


Expressed in terms of labeled graphs, this states that a graph $G$ with edge-labeling $K:E(G)\mapsto S_2$ has a consistent vertex-assignment if and only if it has no cycles with an odd number of edges labeled with $(01)$. Thus, $\beta_C(G,K)=0$ if and only if there are no such cycles in the graph. The problem of calculating the classical value of a XOR game, and $\beta_C(G,K),$ is known to be NP-hard \cite{Hastad}. The proof of the statement follows directly from the fact that every unbalanced cycle leads to a contradiction, and from the following characterization of balanced signed graphs in \cite{Har}.

\begin{fact}[\cite{Har}]
A signed graph is balanced if and only if its set of vertices can be partitioned into two disjoint subsets in such a way that each positive edge joins two vertices in the same subset while each negative edge joins two vertices from different subsets.
\end{fact}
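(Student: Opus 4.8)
The plan is to establish the theorem as two matching inequalities between $\beta_C(G,K)$ and the quantity $\tau$, which I define to be exactly the right-hand side of the statement: the minimal number of edge deletions after which $G$ contains no cycle with an odd number of dashed $(01)$ edges. The whole argument rests on the dictionary between deterministic vertex-assignments $f:V(G)\to\{0,1\}$ and bipartitions of the vertices: relative to $f$, a solid ($\mathbb{I}$) edge is contradicted exactly when its two endpoints receive different values, and a dashed ($(01)$) edge is contradicted exactly when its endpoints receive the same value.

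First I would prove $\tau \le \beta_C$, which is the elementary direction and corresponds to the paper's observation that every cycle with an odd number of dashed edges forces a contradiction. Starting from an optimal assignment $f^*$ realizing exactly $\beta_C$ contradictions, let $D$ be its set of contradicted edges. A short parity count — walking around any cycle of $G-D$ and noting that $f^*$ flips precisely across the dashed edges and is preserved across the solid edges, yet must return to its starting value — shows that every cycle of $G-D$ carries an even number of dashed edges. Hence $D$ is a valid deletion set, so $\tau \le |D| = \beta_C$. No external input is needed here.

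The substantive direction is $\beta_C \le \tau$. Here I would take a minimum deletion set $D$ with $|D| = \tau$, so that $G-D$ has no cycle with an odd number of dashed edges, and manufacture an assignment whose contradictions all lie inside $D$. The construction is by propagation on a spanning forest of $G-D$: fix the value at each component's root and extend along forest edges, keeping the value across solid edges and flipping it across dashed ones. For any non-forest edge $e$ of $G-D$, the unique forest path joining its endpoints closes a cycle of $G-D$; the even-parity hypothesis on that cycle is exactly what forces the propagated values at the endpoints of $e$ to satisfy $e$. Thus the resulting $f$ contradicts no edge of $G-D$, so all its contradictions lie in $D$, giving $\beta_C \le \tau$. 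This step is precisely the content of the Harary characterization invoked as the Fact: the absence of an odd-dashed cycle (balance) is equivalent to the existence of the bipartition that places every solid edge inside a part and every dashed edge across the parts.

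Combining the two inequalities yields $\beta_C(G,K) = \tau$, which is the claim. I expect the second direction to be the main obstacle, since it requires the locally-defined flip-assignment to be globally well defined and simultaneously edge-consistent; the only obstruction to such consistency is a cycle with an odd number of dashed edges, so the hypothesis is used in an essential and tight way, and the parity of each fundamental cycle is what makes the construction go through. A minor point to handle is that $G-D$ may be disconnected or contain isolated vertices, which is resolved by running the propagation independently on each connected component and assigning isolated vertices arbitrarily.
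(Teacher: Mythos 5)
Your argument is correct, but be aware of a mismatch in what you claim to be proving: the statement at hand is Harary's balance characterization (a signed graph is balanced iff $V$ splits into two parts with every positive edge inside a part and every negative edge across), whereas the quantity $\tau$ you call ``the right-hand side of the statement'' is the edge-deletion number from the paper's theorem $\beta_C(G,K)=\tau$, a different (stronger, quantitative) claim. Fortunately your proposal contains a complete proof of the Fact as its core, once one notes the dictionary you yourself set up: an assignment $f:V\to\{0,1\}$ contradicting no edge is literally the bipartition of the Fact, with parts $f^{-1}(0)$ and $f^{-1}(1)$, since a solid (positive) edge is satisfied iff its endpoints agree and a dashed (negative) edge iff they disagree. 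Your parity walk around a cycle is then exactly the direction ``bipartition $\Rightarrow$ balance,'' and your spanning-forest propagation, with consistency checked on each fundamental cycle, is exactly ``balance $\Rightarrow$ bipartition'' --- this is the standard proof of Harary's theorem, and it only needs balance of the fundamental cycles, which your hypothesis supplies. The genuine difference from the paper is that the paper does not prove this Fact at all: it imports it by citation from Harary and combines it with the easy parity observation to deduce the theorem on $\beta_C$. You instead prove everything from scratch, which buys self-containedness and in fact the stronger identity $\beta_C(G,K)=\tau$, of which the Fact is the $D=\emptyset$ case; the paper's route buys brevity by black-boxing the graph theory. Your attention to disconnected components and isolated vertices is correct and closes the only routine loose end in the propagation construction.
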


\subsubsection{Complexity of computing the classical value for single color XOR games}
\label{subsec:single-color-XOR}

We now consider a subclass of XOR games in which the winning constraints only ask for anti-correlations between the outcomes. This type of game is represented by a graph in which all the edges are dashed (i.e. labeled by the permutation $\pi=(01)$). Clearly, all bipartite graphs with such a labeling are satisfiable, i.e., the corresponding Bell inequalities have classical value one. Thus, single color games are trivial in the Bell scenario and only relevant in a scenario of contextual games. Also, for general graphs if the edges are all solid (labeled by the identity) then clearly, the game is won by a classical strategy. We now characterize the classical value of contextuality games corresponding to single $(01)$ color non-bipartite graphs, as we shall see computing the classical value is hard even in this simplest possible scenario. 
%
%

\begin{obs} For a graph $G$ with dashed edges only, $\beta_C(G,K)$ equals the minimal number of edges needed to be removed, so that the resulting graph is bipartite.
\end{obs}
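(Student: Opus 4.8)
The plan is to show that both quantities in the statement equal $|E(G)|-\mathrm{maxcut}(G)$, where $\mathrm{maxcut}(G)$ denotes the maximum number of edges crossing a bipartition of $V(G)$ into two classes. First I would invoke the convexity reduction already used above, so that the classical value is attained by a deterministic assignment $f:V(G)\mapsto\{0,1\}$; such an $f$ is precisely a $2$-colouring of the vertices, equivalently a bipartition $V(G)=S_0\cup S_1$ into disjoint classes $S_i=f^{-1}(i)$. Since every edge is dashed, the winning constraint on an edge $e=\{x,y\}$ is $f(x)\oplus f(y)=1$, i.e.\ $f(x)\neq f(y)$. Hence a contradiction occurs at $e$ exactly when $f(x)=f(y)$, that is when $e$ is \emph{monochromatic} (both endpoints in the same class). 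Writing $m(f)$ for the number of monochromatic edges of the colouring $f$, counting contradictions counts the monochromatic edges, and since every non-monochromatic edge is a crossing edge,
\be
\beta_C(G,K)=\min_{f} m(f)=|E(G)|-\mathrm{maxcut}(G).
\ee

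Next I would identify the edge bipartization number with the same expression by showing $\min\{|F|:F\subseteq E(G),\ G-F\text{ bipartite}\}=\min_f m(f)$. For the inequality ``$\leq$'', take a colouring $f^\ast$ minimising $m(f)$ and delete exactly its monochromatic edges; the surviving graph contains only crossing edges of $f^\ast$ and is therefore bipartite, so the bipartization number is at most $m(f^\ast)$. For ``$\geq$'', given any $F$ with $G-F$ bipartite, fix a $2$-colouring $g$ of $V(G)$ witnessing the bipartiteness of $G-F$; then every edge of $G$ that is monochromatic under $g$ must lie in $F$, whence $|F|\geq m(g)\geq\min_f m(f)$. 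Combining the two directions gives that the minimal number of edges whose removal makes $G$ bipartite is again $\min_f m(f)=\beta_C(G,K)$, which is exactly the claim.

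The only genuinely delicate point is this last equivalence between ``making $G$ bipartite by edge deletion'' and ``minimising monochromatic edges over $2$-colourings''; everything else is a direct dictionary between deterministic vertex assignments and vertex $2$-colourings. I expect the forward direction of that equivalence to be the step worth stating carefully: one must observe that a bipartition witnessing bipartiteness of $G-F$ automatically certifies that \emph{all} of $G$'s monochromatic edges (for that bipartition) were deleted, which is what forces $|F|\geq m(g)$. The reverse direction is immediate, since deleting all monochromatic edges of a fixed colouring leaves only crossing edges, which is by definition a bipartite graph. No machinery beyond the convexity reduction to deterministic strategies is needed.
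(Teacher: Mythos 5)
Your proof is correct, and in one of the two directions it is genuinely more elementary than the paper's. Both you and the paper handle the direction ``bipartization number $\geq \beta_C$'' identically in spirit: a bipartition witnessing bipartiteness of what remains after deleting $F$ yields a $0/1$ assignment whose contradictions all lie inside $F$. For the converse direction, however, the paper invokes the odd-cycle characterization of bipartiteness --- if the remainder were not bipartite it would contain an odd cycle of dashed edges, and such a cycle admits no consistent assignment --- whereas you avoid cycles entirely: deleting the monochromatic edges of an optimal colouring $f^\ast$ leaves only crossing edges, and that graph is bipartite by definition, with the colour classes as the bipartition. Your dictionary (deterministic assignments $\leftrightarrow$ $2$-colourings, contradictions $\leftrightarrow$ monochromatic edges) thus reduces both inequalities to one-line verifications; what the paper's route buys instead is continuity with its surrounding narrative, since the odd-cycle argument is exactly the specialization of its preceding cycle-based theorem and of Harary's balance criterion for signed graphs. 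Your intermediate reformulation $\beta_C(G,K)=|E(G)|-\mathrm{maxcut}(G)$ is a bonus the paper never states explicitly: it exhibits the problem as the complement of MAX-CUT, which makes the MaxSNP-hardness assertion that immediately follows the observation (citing \cite{PY91}) transparent rather than merely quoted.
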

\begin{proof}Clearly, a bipartite graph with only dashed edges is satisfiable: one can assign value $0$ to all vertices in one 
partition, and value $1$ to the vertices in the other partition. To see the converse, recall that a graph is bipartite if and only if it does not contain an odd cycle. Now, if a graph $G'$ obtained from $G$ by removal of edges is not bipartite, it must contain an odd cycle. An odd cycle of $(01)$ edges clearly contains a contradiction for every vertex assignment. 
%
%
\end{proof}

Thus, determining the classical value of a single color contextuality XOR game is equivalent to finding the edge-bipartization number $\beta_c^{(2)}$ of the corresponding graph. This problem is known to be MaxSNP-hard \cite{PY91}. It can be approximated to a factor of O$(\sqrt{\log n})$ in polynomial time, where $n$ is the total number of vertices (see \cite{ACMM05}). Also, note that assuming the Unique Games Conjecture, it is NP-hard to approximate Edge Bipartization within any constant factor \cite{Khot02}. 

Note that for the corresponding single color subclass for XOR-d games, the edge-bipartization number only gives an upper bound on $\beta_C(G, K)$ (a lower bound on the classical value).  
Since all cycles of even length in such graphs have $\beta_C=0$, removing all cycles of odd length will result in a graph without contradictions. 
However, this is not always an optimal solution. For example, considering $C_5$ (the cycle graph of length $5$) labeled with any single permutation $\pi \in L_3$, we find that $\beta_C^{(2)} = 1$ while clearly $\beta_C = 0$ since a vertex assignment satisfying such a winning constraint can always be found (by assigning the same value to all vertices according to $\pi$). 


\subsection{XOR-d games}

The classical value of the specific XOR-d game called the CHSH-d game has been studied in \cite{BavarianShor, Pivoluska} using techniques from algebraic geometry. In this section we study the classical value of  generalized XOR-d games using graph-theoretic methods. 
Clearly, if the game graph is cycle-free (forms a tree), then any set of winning constraints for this graph can be satisfied. Hence it must be the presence of the cycles, which disallows satisfiability. Just like an unbalanced cycle in an XOR game graph leads to a contradiction, there are also "bad" cycles in XOR-d game graphs. These are the cycles for which no consistent vertex-assignment exists that satisfies all the winning constraints in the cycle. There are "good" cycles in an XOR-d game graph analogous to the balanced cycles in the binary XOR case, for which any consistent vertex assignment satisfying the winning constraint is admissible.   However, in the case of XOR-d game graphs, we encounter new "ugly" cycles, for which only certain particular vertex assignments satisfy the cycle (see Fig \ref{fig:good-bad-ugly}). It then becomes a non-trivial question to study how many edges one needs to remove in order to make a graph satisfiable, as for instance in the Fig. \ref{fig:przyklad} removing a bridge (a single edge connecting two components) of the graph can lead to a better result than a brute-force removal of one edge per each "ugly" cycle. 
%
\begin{center}
\begin{figure}[h!]
		\includegraphics[width=0.4\textwidth]{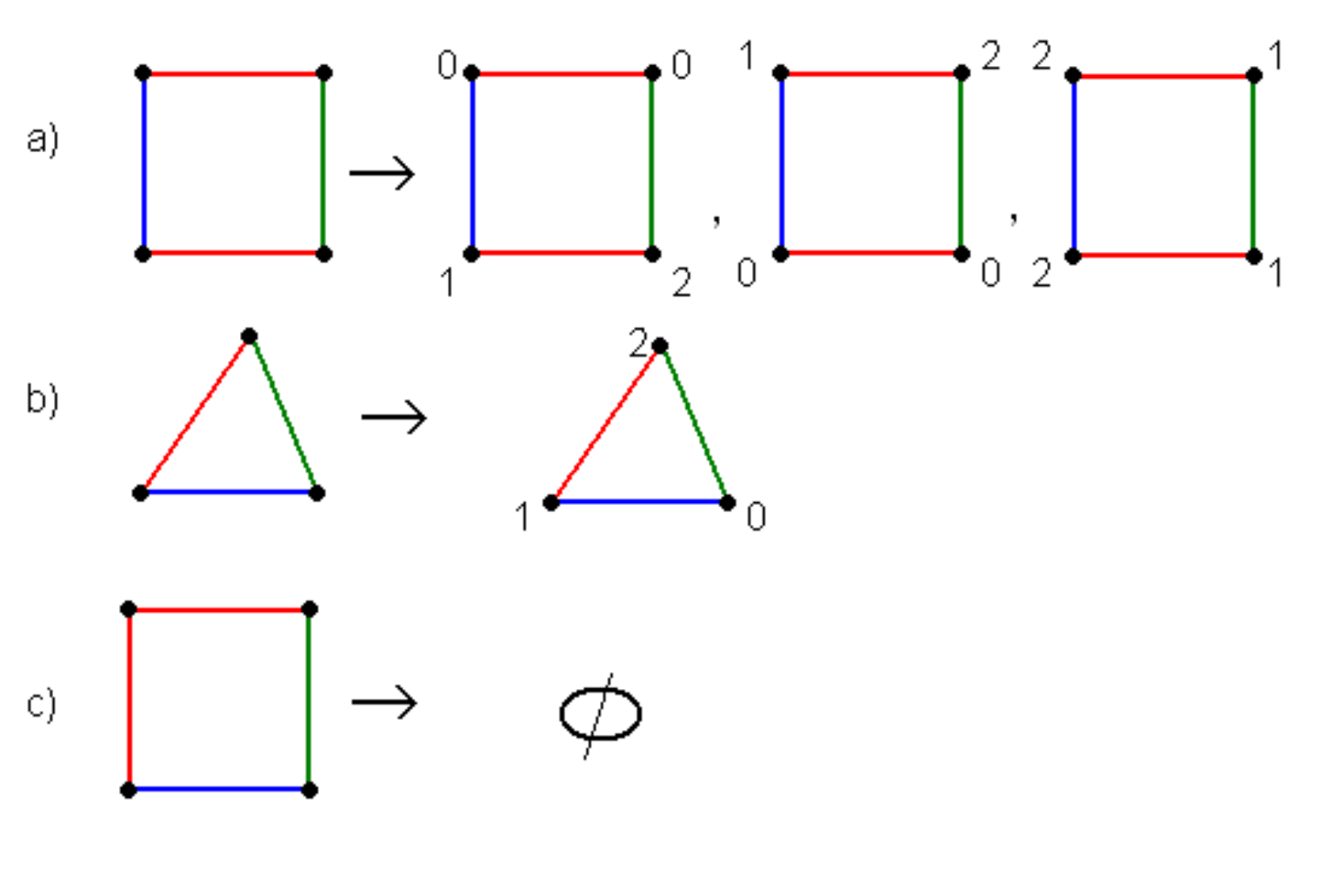}
\caption{(Color online)(a) Exemplary good cycles - with all 3 consistent assignments (b) ugly cycles with only 1 consistent assignment and (c) bad cycles with no consistent assignment. }
	\label{fig:good-bad-ugly}
\end{figure}
\end{center}


\begin{center}
\begin{figure}[h!]
		\includegraphics[width=0.4\textwidth]{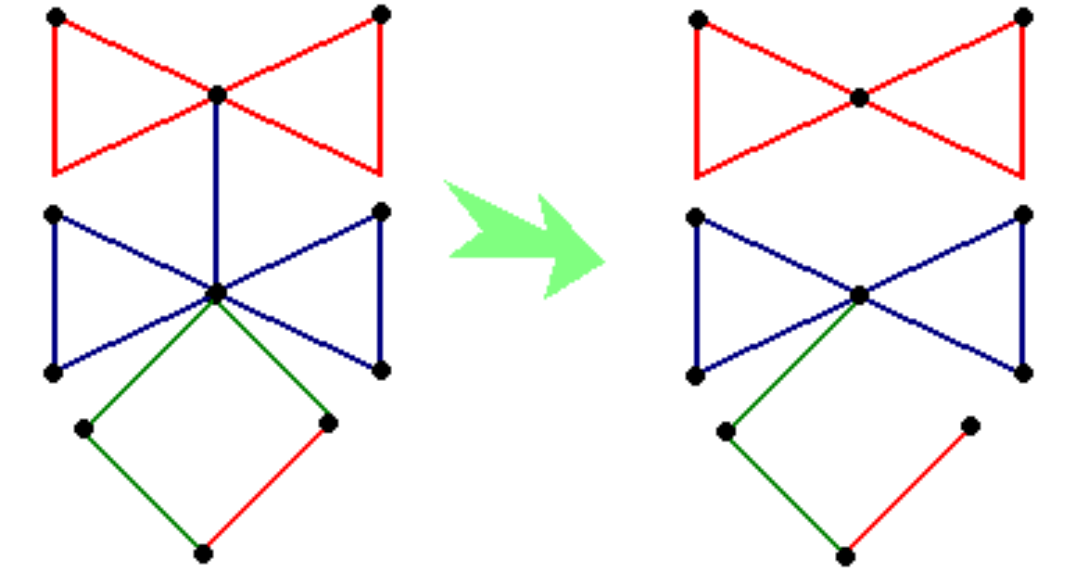}
\caption{(Color online) Removing the blue edge which is  a bridge and single green edge, leaves two components that are already satisfiable. It is then cheaper in terms of edges than removing three edges in a way that both the red component and the green one are cycle-less.}
	\label{fig:przyklad}
\end{figure}
\end{center}


\subsubsection{The Good, bad and the ugly cycles}

We say that a cycle $C$ in a graph $G$ with edge-labeling $K:E\mapsto S_d$ is \textit{bad} if it has no vertex-assignment that satisfies the constraints and \textit{good} if it has $d$ such assignments (i.e. the largest possible number), otherwise the cycle is \textit{ugly}. We denote by $\xi$ with the corresponding subscript (g,b,u) the number of good, bad and ugly cycles respectively. 
Clearly, any bad cycle has to be removed to make the graph satisfiable, while also removing all the ugly cycles necessarily leaves a satisfiable graph. Now, if there are no ugly cycles $(\xi_{u}(G,K) =0)$, then $\beta_C(G,K) =\xi_{b}(G,K)$ if however 
$\xi_{u}(G,K) >0$, we can leave at least one ugly cycle. This is because the single ugly subgraph has an assignment, which determines a consistent assignment for the whole graph. Therefore, we have the following observation. 
\begin{obs}
For any XOR-d game graph $G$ with edge-labeling $K$ using $d$ colors, if $\xi_{u}(G, K) = 0$ then $\beta_C(G,K) =\xi_{b}(G,K)$, and if $\xi_{u}(G, K) > 0$  we have
\be
\xi_{b}(G,K) \leq \beta_C(G,K) \leq \xi_{b}(G,K) + \xi_{u}(G,K) -1
\ee
\end{obs}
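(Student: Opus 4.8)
The plan is to recast $\beta_C(G,K)$ as a minimum edge-deletion quantity and then control it by exhibiting an explicit deletion set for the upper bounds and a covering argument for the lower bound. First I would record the standard reduction: for any deterministic assignment $f$, the set of edges on which $f$ produces a contradiction is a set whose removal leaves $f$ satisfying all remaining constraints, and conversely any edge set whose removal makes $(G,K)$ satisfiable witnesses an assignment with that many contradictions. Hence $\beta_C(G,K)$ equals the minimum number of edges one must delete so that the remaining labelled graph admits a consistent vertex assignment, and I may work in each connected component separately.

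Next I would establish the satisfiability criterion that drives both bounds. Reading off the constraints around a cycle $C$, a base value $f(x_0)$ extends consistently around $C$ iff it is a fixed point of the net permutation $\Pi_C$ obtained by composing the edge permutations around $C$; thus an isolated cycle has $|\mathrm{Fix}(\Pi_C)|$ satisfying assignments, which is $0$ for a bad, $d$ for a good, and strictly between for an ugly cycle. Fixing a spanning tree and propagating along it, satisfiability of a connected labelled graph becomes equivalent to $\bigcap_e \mathrm{Fix}(\Psi_e)\neq\emptyset$ over its fundamental cycles. The consequences I need are: (i) a bad cycle forces the intersection to be empty, so any satisfiable subgraph contains no bad cycle; (ii) if every cycle is good the graph is fully satisfiable; and (iii) if all cycles are good except a single ugly cycle $C^*$, choosing the spanning tree so that $C^*$ is fundamental leaves the intersection equal to $\mathrm{Fix}(\Pi_{C^*})\neq\emptyset$, so the graph is satisfiable.

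For the upper bounds I would delete edges to kill cycles. When $\xi_{u}(G,K)=0$, deleting one edge from each bad cycle leaves a graph all of whose surviving cycles are good, which is satisfiable by (ii); since this deletes at most $\xi_{b}(G,K)$ edges, $\beta_C\le \xi_{b}$. When $\xi_{u}(G,K)>0$, I delete one edge from each bad cycle and from every ugly cycle but one; every surviving cycle is then good or the spared ugly cycle, so by (iii) the result is satisfiable, giving $\beta_C\le \xi_{b}+\xi_{u}-1$. The lower bound then follows from (i): the deletion set of an optimal solution must meet every bad cycle, so its size is at least the number of bad cycles that must be broken separately, and combining this with the matching upper bound yields $\beta_C=\xi_{b}$ in the case $\xi_{u}=0$.

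The step I expect to be the main obstacle is precisely this counting in the lower bound. Deleting "one edge per bad cycle" produces a set of size $\xi_{b}$, and the implication "the deletion set meets every bad cycle, hence has size $\ge \xi_{b}$" both rely on distinct bad cycles being edge-disjoint; if two bad cycles share an edge, a single deletion can destroy both and the naive count overshoots $\beta_C$ (already a theta-graph with three pairwise-bad cycles has $\beta_C=2<3=\xi_{b}$). I would therefore either restrict attention to graphs in which the counted bad and ugly cycles are edge-disjoint -- the cactus-type situation of Figs.~\ref{fig:good-bad-ugly}--\ref{fig:przyklad} -- or replace $\xi_{b},\xi_{u}$ by the sizes of maximal edge-disjoint families of bad and ugly cycles; it is exactly this independence hypothesis that turns the inequalities into equalities rather than one-sided estimates.
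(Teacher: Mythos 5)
Your reconstruction follows essentially the same route as the paper's own (very terse) argument: recast $\beta_C$ as a minimum edge-deletion quantity, characterize satisfiability of a cycle through the fixed points of the composed permutation, delete one edge per bad/ugly cycle for the upper bounds, and justify sparing one ugly cycle by a spanning-tree propagation. Your upper-bound half is correct and is in fact the rigorous version of what the paper dispatches in one sentence (``the single ugly subgraph has an assignment, which determines a consistent assignment for the whole graph''): choosing the spanning tree to contain $C^*$ minus one edge, so that $C^*$ becomes a fundamental cycle, is exactly the missing justification.

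The obstacle you flag in the lower bound, however, is not a defect of your write-up but a genuine error in the Observation and in the paper's proof of it. The paper's step ``any bad cycle has to be removed'' only shows that an optimal deletion set is an edge transversal of the bad cycles, hence $\beta_C$ is at least the \emph{minimum transversal size}, not at least the \emph{number} $\xi_b$ of bad cycles; the two coincide only under the pairwise edge-disjointness you identify. Your theta-graph counterexample is moreover realizable inside the paper's own XOR-3 class: take $K_{2,3}$ with parts $\{u,v\}$ and $\{m_1,m_2,m_3\}$, label each edge $um_i$ with $\pi_0$ and the edges $m_iv$ with $\pi_0,\pi_1,\pi_2$ respectively (labels from $L_3$, so the edge labeled $\pi_k$ demands that the endpoint values sum to $k$ mod $3$). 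The path through $m_i$ then composes to the shift $f(v)\equiv f(u)+(i-1)$, so all three $4$-cycles are bad: $\xi_b=3$, $\xi_u=0$ (consistent with the paper's Theorem that complete bipartite graphs have no ugly cycles). Yet $\beta_C=2$: for any values of $f(u),f(v)$ exactly one of the three paths can be fully satisfied, and each of the other two can be arranged to violate exactly one edge. Hence $\beta_C=2<3=\xi_b$, contradicting the claimed equality in the case $\xi_u=0$. (Note that for $d=2$ a parity argument forbids all three cycles of a theta graph from being bad, so the counterexample genuinely requires $d\geq 3$.) The repaired statement is the one your argument actually proves: when $\xi_u=0$, $\beta_C$ equals the minimum number of edges meeting every bad cycle, and this equals $\xi_b$ precisely under your edge-disjointness hypothesis; without that hypothesis only the upper bounds survive.
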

%
%

It is clear that a graph with only one cycle can have at most one contradiction. Whether or not there is a contradiction can be determined through the composition of all permutations assigned to the cycle's edges. We define a permutation $\pi_{C_t}=K(e_1)K(e_2)...K(e_t),$ where $e_i\cap e_{i+1}=\{v_i\}$ for all $i$ and $v_t=v_0.$

\begin{theorem}
A cycle $C_t$ has a consistent vertex-assignment for a given edge-labeling $K$ if and only if $\pi_{C_t}$ has at least one fixed point.
\end{theorem}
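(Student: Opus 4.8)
The plan is to convert the existence of a consistent vertex-assignment on the cycle into a fixed-point condition by propagating a value around the cycle edge by edge. Label the cycle's vertices $v_0, v_1, \ldots, v_{t-1}$ with $v_t = v_0$, so that $e_i = \{v_{i-1}, v_i\}$ for $i = 1, \ldots, t$. A vertex-assignment $f$ records values $a_i := f(v_i) \in [d]$, and the winning constraint carried by edge $e_i$ reads $K(e_i)(a_{i-1}) = a_i$. The first thing I would note is that, by property (P1), each $K(e_i)$ is an involution, so this constraint is symmetric in its two endpoints and the direction in which we traverse the cycle is immaterial; this is what licenses reading all constraints as a single left-to-right chain.

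First I would prove the forward direction. Given a consistent assignment $f$, put $a_0 = f(v_0)$. The constraints on $e_1, \ldots, e_{t-1}$ force $a_i = K(e_i)(a_{i-1})$, whence $a_{t-1} = \big(K(e_{t-1}) \circ \cdots \circ K(e_1)\big)(a_0)$, while the closing constraint on $e_t$ requires $K(e_t)(a_{t-1}) = a_0$. Composing the two gives $\big(K(e_t) \circ \cdots \circ K(e_1)\big)(a_0) = a_0$, so $a_0$ is a fixed point of $K(e_t) \circ \cdots \circ K(e_1)$. Because each factor is an involution, this permutation is exactly the inverse of $\pi_{C_t} = K(e_1) \circ \cdots \circ K(e_t)$, and a permutation and its inverse share the same fixed-point set; hence $\pi_{C_t}$ has a fixed point.

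For the converse I would run the same computation in reverse. Suppose $\pi_{C_t}$, equivalently its inverse $K(e_t) \circ \cdots \circ K(e_1)$, fixes some $a_0$. Define $f(v_0) = a_0$ and propagate $f(v_i) = K(e_i)\big(f(v_{i-1})\big)$ around the cycle. By construction every edge $e_1, \ldots, e_{t-1}$ is satisfied, and the fixed-point property yields $K(e_t)\big(f(v_{t-1})\big) = a_0 = f(v_0)$, so the closing edge $e_t$ is satisfied too; thus $f$ is a consistent assignment.

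The argument is essentially bookkeeping of compositions, so I do not expect a deep obstacle; the one point that genuinely needs care is the composition-order convention hidden in the definition of $\pi_{C_t}$. I would dispose of it once and for all through (P1): since each edge permutation is its own inverse, reversing the traversal sends the cycle permutation to its inverse, which has the same fixed points, so the statement is insensitive to whether $\pi_{C_t} = K(e_1)\cdots K(e_t)$ is read left-to-right or right-to-left. As a closing remark I would observe that the same propagation shows distinct fixed points give distinct assignments, so the number of consistent vertex-assignments equals the number of fixed points of $\pi_{C_t}$ — which is precisely what underlies the good/bad/ugly trichotomy ($d$ fixed points, none, or an intermediate number).
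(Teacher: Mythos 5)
Your proof is correct and follows essentially the same route as the paper's: propagate an assignment edge by edge around the cycle and observe that consistency of the closing edge is exactly the condition that the starting value is a fixed point of the composed permutation $\pi_{C_t}$. Your write-up is in fact more careful than the paper's very terse proof, notably in resolving the composition-order convention by using (P1) to note that reversing the traversal only replaces the cycle permutation by its inverse, which has the same fixed points, and in recording the corollary that the number of consistent assignments equals the number of fixed points.
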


\begin{proof} 
It is easy to see, that for a vertex-assignment $k:V(C_t)\mapsto [d]$ a contradiction happens in $C_t$ iff there exists $k(v_0) \neq \pi_c(k(v_0))$
where $\pi_c \equiv K(e_{t-1})K(e_{t-2})...K(e_1)$ with $e_1$ an edge incident with $v_0$, and $v_t=v_0$. This hawever is equivalent to the fact
that $k(v_0)$ is a fixed point of $\pi_c$.
\end{proof}

\begin{cor}
The number of fixed points of $\pi_{C_t}$ is equal to the number of consistent vertex-assignments of $C_t.$
\end{cor}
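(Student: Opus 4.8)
The plan is to promote the equivalence of the preceding theorem into an exact enumeration by exhibiting a bijection between the consistent vertex-assignments of $C_t$ and the fixed points of the composite permutation $\pi_{C_t}$. The guiding observation, already present in the theorem's proof, is that a vertex-assignment of a cycle is completely rigid: once a value $c = k(v_0)$ is fixed at a base vertex $v_0$, each edge-constraint propagates the value to the next vertex in exactly one way, because every label $K(e_i)$ is a bijection of $[d]$. Hence an assignment of $C_t$ carries no free choice beyond the value $k(v_0)$, and the whole question of consistency collapses to a single closing condition at $v_0$.

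Concretely, I would define the propagation map $\Phi$ sending a value $c \in [d]$ to the assignment with $\Phi(c)(v_0) = c$ and $\Phi(c)(v_i) = K(e_i)\bigl(\Phi(c)(v_{i-1})\bigr)$ for $i = 1, \dots, t-1$. By construction $\Phi(c)$ satisfies every constraint except possibly the one on the closing edge $e_t$, and $\Phi$ is injective since distinct values disagree already at $v_0$. Conversely, every consistent assignment $k$ obeys the same forward recursion, so $k = \Phi(k(v_0))$; thus $\Phi$ is a bijection from $[d]$ onto the assignments consistent on all but possibly the closing edge. The theorem's computation then singles out exactly which of these close up: $\Phi(c)$ is globally consistent if and only if $K(e_t)\bigl(\Phi(c)(v_{t-1})\bigr) = c$, which unwinds to $\pi_{C_t}(c) = c$, i.e.\ $c$ is a fixed point. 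Restricting $\Phi$ to the fixed-point set therefore yields a bijection onto the consistent assignments, giving the claimed equality of cardinalities.

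The one point demanding care, and the step I would treat most carefully, is that the product $\pi_{C_t}$ --- and with it its number of fixed points --- seems to depend on the choice of base vertex $v_0$ and on the orientation in which the cycle is traversed, whereas the count of consistent assignments plainly does not. I would settle this by noting that moving the base point replaces $\pi_{C_t}$ by a cyclic conjugate $A^{-1}\pi_{C_t}A$, while reversing the traversal replaces it by the product of the labels in reverse order; since in an XOR-$d$ game every label lies in $L_d$ and is thus an involution, this reverse product is precisely $\pi_{C_t}^{-1}$. As conjugation preserves cycle type and inversion preserves fixed points (because $\pi(x)=x$ iff $\pi^{-1}(x)=x$), the number of fixed points is an invariant of the labeled cycle, independent of all such conventions. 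This makes the corollary well-posed and completes the argument with no computation beyond what the theorem already supplies.
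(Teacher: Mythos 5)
Your proof is correct and takes essentially the same route as the paper: the paper's proof of the preceding theorem already shows that an assignment of $C_t$ is determined by propagation from its value at $v_0$ and is consistent exactly when that value is a fixed point of $\pi_{C_t}$, so the corollary follows by counting --- precisely the bijection $\Phi$ you make explicit. Your additional check that the fixed-point count is independent of the base vertex and traversal orientation (conjugation and inversion preserve fixed points) is a well-posedness point the paper leaves implicit, but it does not alter the underlying argument.
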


It follows that the number of contradictions in a given graph is at most the number of cycles. It may, however, be greater than the number of bad cycles. 

\subsubsection{The graph KG}
To study the number of contradictions and consistent vertex-assignments in a given graph $G$ with edge-labeling $K:E(G)\mapsto S_d$, we define the graph $KG$, described in more detail in \cite{RS}. This graph is constructed as follows.
\begin{enumerate}
\item Replace each vertex $v_i\in V(G)$ with a disjoint set $\{v_{i0},...,v_{in-1}\}\in V(KG)$ of $d$ vertices. 

\item Connect two vertices $v_{is}, v_{jt}\in V(KG)$ with an edge if and only if the graph $G$ has an edge $v_iv_j$ and $\pi_{ij}(s)=t,$ where $\pi_{ij}=K(v_iv_j).$
\end{enumerate}

For a connected graph $G$ the {\it assignment number} $\beta'_C(G,K)$ is equal to the number of connected components of $KG$ isomorphic to $G.$ Each such component contains exactly one vertex from the set corresponding to a given vertex. Thus, a consistent vertex-assignment exists if and only if there exists a vertex $v_i$ not connected to any $v_j\in\{v_0,...,v_d\}.$ 

\begin{theorem} \cite{RS}
For any given $G_1, G_2, K_1:E(G_1)\mapsto S_d, K_2:E(G_2)\mapsto S_d$ the labeled graphs $(G_1,K_1)$ and $(G_2,K_2)$ are equivalent if and only if $K_1G_1$ and $K_2G_2$ are isomorphic.
\end{theorem}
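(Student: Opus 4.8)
The plan is to read the auxiliary graph $KG$ as the \emph{permutation-derived graph} (a $d$-sheeted covering) of the voltage assignment $K\colon E(G)\mapsto S_d$: the map $p\colon V(KG)\to V(G)$ sending $v_{is}\mapsto v_i$ is a covering projection whose fibre $p^{-1}(v_i)=\{v_{i0},\dots,v_{i,d-1}\}$ is an independent set of size $d$, and over every edge $v_iv_j\in E(G)$ the lifted edges form the perfect matching $s\mapsto\pi_{ij}(s)$ between the two fibres. (To make $KG$ an undirected simple graph for general labels one fixes an orientation of $G$ and reads $\pi_{ji}=\pi_{ij}^{-1}$; for the involutive labels of the XOR-$d$ games this choice is irrelevant.) With this dictionary the theorem says that two voltage graphs have isomorphic covers exactly when they differ by a base isomorphism together with local gauge changes, and both implications can be approached with covering-graph and monodromy arguments.

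For the forward direction I would show that each generator of the equivalence lifts to an isomorphism of $KG$. A label-preserving isomorphism $\phi\colon G_1\to G_2$ lifts verbatim by $\widetilde\phi(v_{is})=\phi(v_i)_s$. For a single switch $s(v,\sigma)$, which replaces $\pi$ by $\pi\sigma$ on every edge at $v$, I would take the map that is the identity on all fibres except $p^{-1}(v)$ and relabels that fibre by $v_s\mapsto v_{\sigma^{-1}(s)}$; a one-line check shows that an edge $\{v_{is},\,v_{j,\pi_{ij}(s)}\}$ is carried to an edge of the switched graph, since $(\pi_{ij}\sigma)(\sigma^{-1}(s))=\pi_{ij}(s)$. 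Hence a switch is merely a relabelling of one fibre and is invisible at the level of the abstract graph $KG$. Composing these lifts gives ``equivalent $\Rightarrow K_1G_1\cong K_2G_2$''.

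The converse is the substantive direction. Given an abstract isomorphism $\Psi\colon K_1G_1\to K_2G_2$, the goal is to produce a base isomorphism $\bar\Psi\colon G_1\to G_2$ together with switching permutations $\{\sigma_v\}$ that witness the equivalence. The useful reduction is gauge fixing: choosing a spanning tree of (each component of) the connected base $G$ and switching so that all tree edges carry the identity — an operation that, by the forward direction, does not change the isomorphism type of $KG$ — leaves only the co-tree voltages, which are exactly the monodromies $\pi_{C_t}$ of the fundamental cycles introduced above. Standard covering theory then says that over a \emph{fixed} base two normalized assignments have isomorphic covers iff their monodromy homomorphisms $\pi_1(G)\to S_d$ are conjugate in $S_d$, and such a simultaneous conjugation is realized by switching every vertex by the same permutation. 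Thus, once the two bases are identified, ``isomorphic covers'' and ``switching-equivalent'' coincide.

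The hard part is precisely that identification: an abstract graph isomorphism of covers need not respect the fibre partition, so $\Psi$ need not descend to a map of base graphs, and a priori non-isomorphic bases can share a common cover (the Leighton phenomenon). The crux is therefore to show that $\Psi$ can be post-composed with fibre relabellings (i.e.\ switchings) so as to become fibre-preserving, after which it descends to the required $\bar\Psi$ and the residual per-fibre permutations are read off as the switching data. I would try to force fibre-preservation from the rigidity of these particular covers — each fibre is an independent set of size $d$ meeting every adjacent fibre in a perfect matching, while the degree $d$ and the counts $|V(G)|=|V(KG)|/d$ and $|E(G)|=|E(KG)|/d$ are pinned down — reconstructing the fibre partition of $K_2G_2$ as the $\Psi$-image of that of $K_1G_1$ and verifying it is the canonical one. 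This combinatorial reconstruction of the covering structure, rather than the monodromy bookkeeping, is where I expect the real obstacle to lie.
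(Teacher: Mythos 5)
Your dictionary is the right one ($KG$ is exactly the permutation-derived/covering graph of the voltage assignment $K$), and your forward direction is correct and complete: a switch $s(v,\sigma)$ lifts to the relabelling $v_s\mapsto v_{\sigma^{-1}(s)}$ of the single fibre over $v$, label-preserving isomorphisms lift sheet-by-sheet, and composing these lifts proves ``equivalent $\Rightarrow K_1G_1\cong K_2G_2$.'' For the record, the paper contains no proof to compare against: the theorem is imported from \cite{RS}, listed as ``in preparation,'' so your proposal has to stand on its own.

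It cannot stand, because the obstacle you flagged in the converse is not merely ``where the real work lies'' --- it falsifies the statement as quoted. Take $d=2$. Let $G_1$ be the triangular prism (triangles $a_0a_1a_2$ and $b_0b_1b_2$ joined by the rungs $a_ib_i$) with every edge labelled $(01)$, and let $G_2=K_{3,3}$ with the three edges of a perfect matching labelled $(01)$ and the remaining six edges labelled with the identity. In $K_1G_1$ each all-negative triangle lifts to a hexagon, and the rungs lift to a perfect matching joining position $k$ of one hexagon to position $k+3$ of the other; in $K_2G_2$ each sheet induces $K_{3,3}$ minus a perfect matching, which is again a hexagon, and the matching edges lift to a perfect matching with the same $k\mapsto k+3$ pattern. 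Hence $K_1G_1\cong K_2G_2$ (both are the hexagonal prism), while $(G_1,K_1)$ and $(G_2,K_2)$ are \emph{not} equivalent: equivalence requires an isomorphism of the underlying graphs, and the prism (which contains triangles) is not isomorphic to the bipartite $K_{3,3}$. Equivalently put, the hexagonal prism carries two non-conjugate free involutions with non-isomorphic quotients, so the fibre partition that your ``rigidity/reconstruction'' step hoped to recover is genuinely not determined by the abstract graph; this is precisely the common-cover phenomenon you named, realized already on six-vertex bases. (If disconnected bases are allowed it is even easier: $C_6$ with all edges identity and $2C_3$ with all edges $(01)$ both have derived graph $2C_6$.) Consequently no completion of your converse exists; the backward implication can only hold under a stronger hypothesis --- e.g.\ that $K_1G_1\cong K_2G_2$ via a \emph{fibre-preserving} (covering) isomorphism, which is the setting your spanning-tree gauge fixing and monodromy-conjugation argument correctly handles over a fixed base --- or under extra restrictions on the graphs and labelings that exclude common covers of non-isomorphic bases. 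Note that the rest of the paper only ever uses the forward implication (equivalent games have equal contradiction and assignment numbers), which your argument does establish.
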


\begin{figure}[h]
	\includegraphics[width=0.40\textwidth]{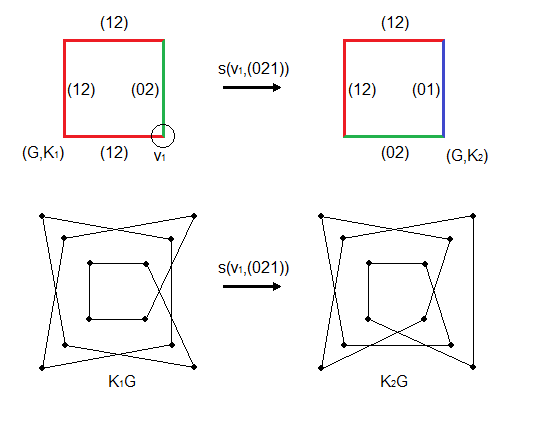}
	\caption{A switching operation on (G,K) and the corresponding isomorphism of KG.}
	\label{fig:switch1}
\end{figure}

It follows that the contradiction numbers $\beta_C(G,K)$ and $\beta_C(G',K')$ of two equivalent labeled graphs $G, G'$ are the same. Analogously these
graphs have the same $\beta_C'(G,K)$. This fact holds true even for some non-linear, but unique games. If $G$ is a bipartite graph and $K:E\mapsto L_d$ (i.e., a XOR-d game) every cycle in $G$ has either $0$ or $d $ consistent vertex-assignments. Furthermore, in \cite{RS} the following theorem is proved:

\begin{theorem}\cite{RS}
For any edge-labeling $K:E\mapsto L_d$ a complete bipartite graph $K_{s,t}$ (i) has no ugly cycles and (ii) is bad if and only if it contains a bad cycle of length 4.
\label{thm:4-cycles}
\end{theorem}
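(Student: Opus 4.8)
The plan is to encode the edge-labeling of $K_{s,t}$, with parts $A=\{a_1,\dots,a_s\}$ and $B=\{b_1,\dots,b_t\}$, as a matrix $M\in\mathbb{Z}_d^{\,s\times t}$, where the edge $\{a_p,b_q\}$ carries the permutation $\pi_{M_{pq}}\in L_d$, i.e.\ the reflection $x\mapsto M_{pq}-x \pmod d$. The key algebraic observation is that each $\pi_i\in L_d$ is an involution and that the product of two reflections is a translation, $\pi_i\pi_j(x)=i-j+x$. Consequently, around any (necessarily even) cycle the composition $\pi_C$ is a pure translation $x\mapsto x+c \pmod d$, which is the identity when $c=0$ (all $d$ points fixed) and fixed-point-free otherwise. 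By the corollary relating fixed points of $\pi_C$ to consistent assignments, this already yields part (i): every cycle has either $d$ or $0$ consistent assignments, so none is ugly. For the $4$-cycle on rows $p,p'$ and columns $q,q'$ one computes $c=M_{pq}-M_{p'q}+M_{p'q'}-M_{pq'}$, so this $4$-cycle is \emph{good} precisely when
\be
M_{pq}+M_{p'q'}=M_{p'q}+M_{pq'}\pmod d .
\ee

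Next I would dispose of the routine parts of (ii). Since part (i) gives $\xi_u(K_{s,t},K)=0$, the Observation yields $\beta_C=\xi_b$, so $K_{s,t}$ is bad exactly when it contains some bad cycle. The direction ``contains a bad $4$-cycle $\Rightarrow$ bad'' is immediate, since any global vertex-assignment restricts to the four vertices of that cycle and would have to satisfy its constraints, which is impossible by assumption. (The degenerate cases $s=1$ or $t=1$ are trees, hence always satisfiable and never bad, so assume $s,t\ge 2$.) It therefore remains to prove the converse in contrapositive form: if \emph{every} $4$-cycle of $K_{s,t}$ is good, then the whole game is satisfiable.

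The heart of the argument is a matrix-factorization step. Assuming all $4$-cycles good, the displayed identity holds for all $p,p',q,q'$. Specializing to the reference indices $p'=q'=1$ gives $M_{pq}=(M_{p1}-M_{11})+M_{1q}$, i.e.\ the label matrix is additively separable,
\be
M_{pq}=h_p+g_q \pmod d, \qquad h_p:=M_{p1}-M_{11},\quad g_q:=M_{1q}.
\ee
Defining the deterministic assignment $f(a_p)=h_p$ and $f(b_q)=g_q$ then yields $f(a_p)+f(b_q)=M_{pq}$, i.e.\ $\pi_{M_{pq}}(f(a_p))=f(b_q)$ on every edge, so all constraints are satisfied and $K_{s,t}$ is good. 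This closes the contrapositive and hence establishes (ii).

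I expect the separability step to be the main (if modest) obstacle: one must recognize that the collection of $4$-cycle conditions over all pairs of rows and columns is exactly the discrete vanishing-mixed-difference condition forcing $M$ into the additive rank-one form $h_p+g_q$, and that this form is in turn precisely the criterion for global satisfiability. Completeness of $K_{s,t}$ is used essentially here, since it guarantees that every pair of rows and every pair of columns actually spans a $4$-cycle; for a non-complete bipartite graph the local $4$-cycle data need not propagate to a global assignment. The preliminary fact that products of $L_d$-reflections are translations is what pins down ``good $=$ shift $0$'' versus ``bad $=$ shift $\neq 0$,'' simultaneously ruling out ugly cycles and validating the reduction to the matrix identity.
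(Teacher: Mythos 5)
Your proposal is correct, but note that the paper itself offers no proof to compare against: Theorem~\ref{thm:4-cycles} is quoted from reference \cite{RS} (``in preparation''), so your argument is a self-contained reconstruction rather than a variant of an in-paper proof. The argument is sound at every step. The dihedral-type observation --- each $\pi_i\in L_d$ is a reflection $x\mapsto i-x$, the product of two reflections is a translation, hence the composition around any even (in particular, any bipartite) cycle is a translation with either $d$ fixed points or none --- combined with the paper's Corollary~1 immediately rules out ugly cycles, and in fact proves the stronger statement for arbitrary bipartite graphs that the paper records just before the theorem. For part (ii), the ``bad $4$-cycle $\Rightarrow$ bad'' direction by restriction is fine, and your contrapositive via the separability identity $M_{pq}=h_p+g_q \pmod d$ (forced by the vanishing of all mixed differences $M_{pq}-M_{p'q}-M_{pq'}+M_{p'q'}$, which is exactly the goodness of all $4$-cycles) produces an explicit satisfying assignment $f(a_p)=h_p$, $f(b_q)=g_q$; you correctly flag that completeness of $K_{s,t}$ is what lets the two-row/two-column conditions propagate to a global rank-one form, and you handle the degenerate tree cases $s=1$ or $t=1$. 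Two cosmetic remarks: the sign of your translation constant $c$ depends on the traversal direction of the $4$-cycle, which is immaterial since only $c=0$ matters; and the appeal to Observation~2 ($\xi_u=0\Rightarrow\beta_C=\xi_b$) is redundant in your write-up, because the explicit assignment already establishes satisfiability without counting bad cycles.
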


We will now consider a type of game in which each of the two players has $d$ possible answers. 
This game corresponds to the complete bipartite graph $K_{s,t}$ with an edge-labeling $K:E\mapsto L_d.$ Thus, to find the classical bounds we search for the minimal set of edges which need to be deleted so that there are no more induced cycles with contradictions. In the case of $K_{3,3}$ and smaller bipartite graphs, by theorem \ref{thm:4-cycles} to make it good, we need only to delete edges until all remaining cycles of length $4$ are good. For all possible edge-labelings of $K_{3,3},$ with three colors $0\leq\beta_C(G,K)\equiv \beta_C\leq 3$. For about $1.23\%$ of labelings $K$, there is $\beta_C=0,$ in $22.22\%$ of cases $\beta_C=1,$ in $74.07\%$ of cases $\beta_C=2$ and in $2.5\%$ of cases $\beta_C=3.$  


\section{Quantum Value : Lov\'{a}sz theta as an upper bound for a single-party contextuality game}
\label{sec:Lovas}

For two-party XOR games, the theorem of Tsirelson \cite{Tsirelson} and the subsequent analysis in \cite{Cleve} gives an efficient semi-definite programming method to compute the exact quantum value. For general XOR-d games however, this is no longer the case and the semi-definite programming hierarchy of \cite{navascues-2008-10} has to be applied. It is at present unknown whether the quantum value of these Bell inequalities can be obtained at some particular level of the hierarchy. An efficiently computable upper bound on the quantum value of general XOR-d and other linear game Bell inequalities was proposed in \cite{RAM} and subsequently generalized to the multi-party scenario in \cite{MRMC15}. 

For single-party contextuality, in \cite{CSW}, it was shown that the quantum value of any non-contextuality inequality involving projectors represented in an orthogonality graph $\Gamma$ is given by the (weighted) Lov\'{a}sz theta number $\theta_w(\Gamma)$ of the orthogonality graph. Analogously, the classical value of the inequality is given by the (weighted) independence number $\alpha_w(\Gamma)$ of the orthogonality graph. While calculating the independence number of an arbitrary graph is a well-known NP hard problem, calculating the Lov\'{a}sz theta number can be achieved by means of a semi-definite program. As such, in the scenario of single-party contextuality as studied in the traditional ``Kochen-Specker" scenario \cite{Kochen-Specker} involving yes-no questions represented by projectors in quantum theory, the quantum value was exactly and efficiently computable by an SDP. Therefore, for the single-party XOR games and their generalization to XOR-d studied so far, one might wonder whether the quantum value is still efficiently computable. The answer to this question turns out to be negative even in the single party scenario.

Let us first describe for a given single-party contextuality game represented by a commutation graph $G$, the method of constructing the corresponding orthogonality graph $\Gamma$, from which we might hope to calculate the quantum value. 

\begin{itemize}
\item Firstly, we list all the maximal cliques $\{C_1(G), \dots, C_m(G)\}$ of the commutation graph $G$, where a maximal clique refers to a complete subgraph that cannot be enlarged. Each maximal clique corresponds to a set of $d$-outcome observables $\{A_i^{(j)}(G)\}$, i.e., $C_i(G) = \{A_i^{(1)}(G), \dots, A_{i}^{(k)}(G)\}$  where $k \leq \omega(G)$ with $\omega(G)$ being the clique number of the commutation graph $G$. 

\item For each maximal clique $C_i(G)$ of size $k$ we list a set of $d^k$ vertices of a new orthogonality graph $\Gamma$. Each of the $d^k$ vertices of $\tilde{C}_i(\Gamma)$ corresponds to an event of the form $(l_1, \dots, l_k|A_1,\dots, A_k)$ with associated projector $\otimes_{j=1}^{k} \Pi_{A_{i}^{(j)}}^{l_{j}}$ for $l_j \in \{1, \dots, d\}$. 

\item Two vertices in $\Gamma$ are connected by an edge if the corresponding projectors are locally orthogonal. In other words, for vertices $u$ and $v$ corresponding to events $(l_1(u), \dots, l_{k_1}(u) |A_1(u), \dots, A_{k_1}(u))$ and $(l_1(v), \dots, l_{k_2}(v) |A_1(v), \dots, A_{k_2}(v))$ are connected by an edge $u \sim v$ if $\exists j_1 \in [k_1], j_2 \in [k_2]$ such that $A_{j_1}(u) = A_{j_2}(v)$ and $l_{j_1}(u) \neq l_{j_2}(v)$.  
We thus see that each maximal clique $C_i(G)$ of size $k$ of the commutation graph $G$ corresponds to a $d^k$ sized maximal clique $\tilde{C}_i(\Gamma)$ of the orthogonality graph $\Gamma$. 

\end{itemize}

Each of the probabilities $P(a,b = \Pi_{x,y}(a)|A_x, A_y)$ appearing in the game expression can be expressed (as marginals) in terms of the probabilities $P(l_1, \dots, l_k | A_1, \dots, A_k)$, so that the game expression can be written as a weighted sum of probabilities of the events appearing in the graph $\Gamma$. An orthonormal representation of a graph $\Gamma$ is a set of unit vectors $|u_v \rangle$ (with $\| |u_v\rangle \| = 1$) such that for $v_1 \sim v_2$ we have $\langle u_{v_1} | u_{v_2} \rangle = 0$. The weighted Lov\'{a}sz theta number of the graph $\Gamma$ was defined by Lov\'{a}sz as \cite{Lovasz-0}
\be 
\theta_w(\Gamma) = \max_{|\psi \rangle, \{|u_v \rangle\}} \sum_{v \in V(\Gamma)} w_v |\langle \psi | u_v \rangle |^2
\ee
where the maximum is over orthonormal representations $\{|u_v \rangle \}$ of $\Gamma$ and an arbitrary normalized unit vector $|\psi \rangle$. Here $V(\Gamma)$ denotes the set of vertices of the graph and $w_v$ denotes the weight with which the probability $P(l_1(v), \dots, l_{k_1}(v) |A_1(v), \dots, A_{k_1}(v))$ associated to the vertex $v$ enters the game expression. 
An example of a commutation graph $G$ and its corresponding orthogonality graph $\Gamma$ is shown in Figure \ref{fig:com-ort}. 
\begin{center}
\begin{figure}[h!]
		\includegraphics[width=0.5\textwidth]{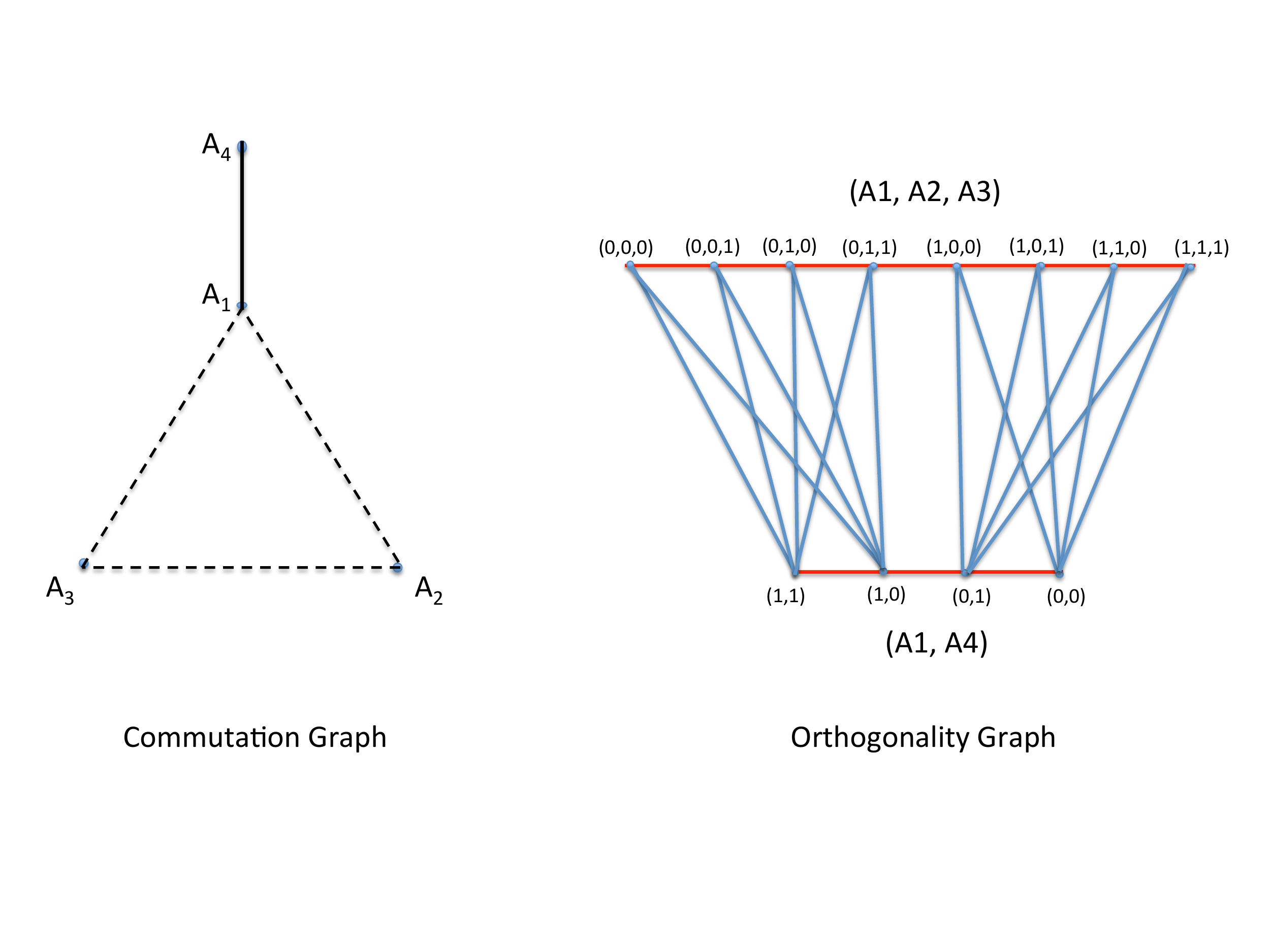}
\caption{(Color online). An example of a single-party XOR scenario with four observables $A_1, A_2, A_3$ and $A_4$. The game imposes a winning constraint of mutual anti-correlations between the observables $A_1, A_2, A_3$ and correlations between $A_1$ and $A_4$. The game is represented here by its commutation graph $G$ (on the left) and the corresponding orthogonality graph $\Gamma$ (on the right) which represents exclusivity relations among events occurring in the game.}
	\label{fig:com-ort}
\end{figure}
\end{center}

It is important to note however that for the general non-contextuality game (both in the XOR and XOR-d scenario) involving general observables in a commutation graph $G$, it is no longer the case that the $\theta_w(\Gamma)$ of the corresponding orthogonality graph gives the quantum value. Instead, we obtain that just as in the case of Bell inequalities, the weighted Lov\'{a}sz theta number only gives an upper bound to the quantum value of general non-contextuality inequalities. Detailed calculations for non-isomorphic graphs for small number of vertices are provided in the next section. In fact as also noted in \cite{Fritz}, in general even for non-contextuality inequalities one needs a hierarchy of semi-definite programs analogous to the well-known semi-definite programming hierarchy \cite{navascues-2008-10} for Bell scenarios, an $n$-partite Bell inequality here being represented by an $n$-partite commutation graph. The analysis of contextuality games via the notion of hyper-graphs with each hyper-edge representing a context was performed in \cite{Fritz} where such an analog of the NPA hierarchy for contextuality was described. 

\section{Linear games for device-independent applications: Pseudo-telepathy}
\label{sec:DIapp}
Linear games are a natural class of Bell inequalities to consider for device-independent applications. Indeed, the class of linear games for binary outcomes (i.e., the XOR games) have been used in most of the device-independent protocols constructed so far, (the CHSH Bell inequality for quantum key distribution \cite{VV12}, the Braunstein-Caves chained Bell inequalities for randomness amplification \cite{CR12} and key distribution against no-signaling adversaries \cite{BHK}, as well as the multi-party XOR games for randomness expansion \cite{MS} as well as randomness amplification \cite{BRGH+13, GM+12}). Linear games have the important property of being \textit{uniform} \cite{KempeRegevToner}, i.e., there exists an optimal quantum strategy for these games where each party's local outcomes are uniformly distributed. This can be seen from the fact that for any quantum strategy for a game with $d$ outcomes, Alice and Bob can make use of a shared random variable $r$ uniformly distributed over $\{0,\dots, d-1\}$ to obtain a quantum strategy with locally random outcomes that achieves the same success probability for the game. Simply Alice performs $a + r \; \text{mod d}$ and Bob performs $b - r \; \text{mod d}$ preserving the value of $a + b  \; \text{mod d}$ while simultaneously randomizing their outcomes. In certain cases, such as the particular example of the CHSH game with ternary outputs in \cite{Liang} or the binary XOR games, locally random (and correlated) outcomes appear naturally in the optimal quantum strategy. As such, it is natural to look for device-independent protocols for randomness or secure key generation that use these Bell inequalities. 

Pseudo-telepathy is an interesting application of quantum correlations to the field of communication complexity. By means of quantum correlations, two (or more) players are able to accomplish a distributed task with no communication at all, which would be impossible using classical strategies alone. Stated in technical terms, these are games $G$ which have $\omega_Q(G) = 1$ but $\omega_C(G) \neq 1$. Pseudo-telepathy games have also found use in certain device-independent protocols \cite{BRGH+13, GM+12} for amplification of arbitrarily weak sources of randomness. In this section, we study the possibility of obtaining pseudo-telepathy within the class of two-party linear games. 

The Braunstein-Caves chained Bell inequalities (which correspond to XOR games for partial functions) have the property that their quantum value approaches $1$ as the number of inputs increases and indeed, this property was very crucial in their use in device-independent applications \cite{CR12, BHK, BKP}. 
While one might asymptotically approach unity with increasing number of measurement settings, for real experimental applications, it is extremely important to find Bell inequalities with finite number of inputs and outputs from which randomness or secure key can be extracted. Linear games being the paradigmatic example of Bell inequalities for which optimal quantum strategies involve locally random outcomes, a natural question is to ask whether finite linear games exist which achieve pseudo-telepathy. Our result states that for both total as well as partial functions, while one might asymptotically approach $1$, no finite XOR-d game with prime $d$ number of outcomes exists for which $\omega_q(G) = 1$ while at the same time $\omega_c(G) \neq 1$. This generalizes the recent result for total XOR-d functions in \cite{RAM} and for binary XOR functions in \cite{Cleve}. 

\begin{theorem}
No finite two-party XOR-d game $G$ corresponding to a (partial or total) function $f(x,y)$ for prime $d$ number of outputs can be a pseudo-telepathy game, i.e., if $\omega_q(G) = 1$, then $\omega_c(G) = 1$. 
\end{theorem}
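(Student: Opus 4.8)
\emph{The plan is to} convert a perfect quantum strategy into a system of operator eigenvalue equations, and then read off a classical assignment directly from the cycle structure of the (bipartite) constraint graph, invoking the good/bad cycle dichotomy already established for labelings $K:E\mapsto L_d$.

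First I would fix a perfect quantum strategy: projective measurements $\{P_x^a\}_a$ for Alice and $\{Q_y^b\}_b$ for Bob on a shared state $|\psi\rangle$ (projective without loss of generality via Naimark, preserving the bipartition). Setting $\omega=e^{2\pi i/d}$, define the generalized observables $A_x=\sum_a\omega^a P_x^a$ and $B_y=\sum_b\omega^b Q_y^b$, which are unitary with $A_x^d=B_y^d=\id$. Since $\omega_q(G)=1$, every losing outcome carries zero amplitude, i.e. $(P_x^a\otimes Q_y^b)|\psi\rangle=0$ whenever $a+b\neq f(x,y)$ (a positive operator with vanishing expectation annihilates the state). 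Summing over outcomes then yields, for every colored edge $(x,y)\in E'$, the clean relation
\be
(A_x\otimes B_y)|\psi\rangle=\omega^{f(x,y)}|\psi\rangle .
\ee

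Next I would extract one-sided scalar relations. For two colored edges $(x,y)$ and $(x',y)$ sharing a Bob vertex $y$, eliminating $B_y$ gives $(A_{x'}^\dagger A_x\otimes\id)|\psi\rangle=\omega^{f(x,y)-f(x',y)}|\psi\rangle$. Writing $|\psi\rangle$ in its Schmidt basis shows that an operator of the form $M\otimes\id$ fixing $|\psi\rangle$ up to a phase must act as that same phase on the whole support of Alice's reduced state $\rho_A$. I would then walk around any cycle $x_0,y_0,x_1,y_1,\dots,x_{m-1},y_{m-1}$ of the colored (bipartite, hence even) constraint graph: the product $(A_{x_0}^\dagger A_{x_1})(A_{x_1}^\dagger A_{x_2})\cdots(A_{x_{m-1}}^\dagger A_{x_0})$ telescopes to $\id$ as an operator, while on $\mathrm{supp}(\rho_A)$ it equals the product of the edge phases. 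Comparing the two gives, for every cycle, the condition $\sum_{i}\big(f(x_i,y_i)-f(x_{i+1},y_i)\big)\equiv 0 \pmod d$.

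Finally I would feed this back into the classical picture. For a bipartite graph with $K:E'\mapsto L_d$ the paper has already shown that every cycle has either $0$ or $d$ consistent assignments (no ugly cycles), and an easy computation shows the composed permutation around a cycle is the translation $x\mapsto x+\sum_i\big(f(x_i,y_i)-f(x_{i+1},y_i)\big)$; the cycle condition just derived forces this translation to be the identity, so every cycle is good, $\xi_b=\xi_u=0$, and hence $\beta_C(G',K')=0$ by the earlier Observation. Therefore $\omega_c(G)=\omega_c(G',K')=1$, with the gray (uncolored) edges irrelevant since in the two-party scenario all of Alice's observables automatically commute with all of Bob's. I expect the main obstacle to be the middle step: justifying rigorously that the one-sided operators act as honest scalars on $\mathrm{supp}(\rho_A)$ and that these scalars compose multiplicatively around the loop, including the case where $\rho_A$ is not full rank and, if one does not first reduce to finite dimension, the separable-Hilbert-space version of the Schmidt argument. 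Primality of $d$ enters through the game definition and the finite-field setting inherited from \cite{RAM}; what the two-party structure buys us is that every cycle of the constraint graph is even, which rules out the ugly cycles and lets the good/bad dichotomy carry the argument.
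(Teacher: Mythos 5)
Your proof is correct, but it takes a genuinely different route from the paper's. The paper follows the Cleve--Hoyer--Toner--Watrous template: it first uses shared randomness to make the local marginals uniform, Fourier-transforms the projectors into unitaries $A_x^k$, $B_y^l$, and then \emph{explicitly constructs} a deterministic strategy by rounding the phase of the first nonvanishing overlap $\langle \psi | A_x \otimes \mathbf{1} | \phi_{s(x)} \rangle$ against a reference basis $\{|\phi_j\rangle\}$ with $|\phi_1\rangle = |\psi\rangle$; that this strategy wins is verified by a Cauchy--Schwarz equality argument which forces phase alignment of two unit vectors, contradicting the sector structure of the rounding function $\lambda$. You instead derive the eigenvalue relations $(A_x \otimes B_y)|\psi\rangle = \omega^{f(x,y)}|\psi\rangle$, eliminate Bob's operator across a shared vertex, use the Schmidt decomposition to turn the one-sided unitaries $A_{x'}^\dagger A_x$ into scalars on $\mathrm{supp}(\rho_A)$ (this step is sound: each such unitary fixes every Schmidt vector of positive weight up to a common phase, hence preserves the support and the scalars compose multiplicatively), and telescope around cycles to get $\sum_i \bigl(f(x_i,y_i) - f(x_{i+1},y_i)\bigr) \equiv 0 \pmod d$ on every cycle of the colored graph --- exactly the condition that every cycle is good --- then close with the paper's own classical machinery. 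Your route buys several things: it needs neither the uniformization step nor the rounding function, it meshes naturally with the graph-theoretic framework of the paper, it handles the gray edges transparently, and primality of $d$ is never actually used, so you prove the statement for arbitrary $d$. What the paper's construction buys in exchange is a classical strategy read off directly from the quantum data with no reference to cycle structure, and a template that carries over essentially verbatim to the multi-party two-body-correlator theorem proved immediately afterwards, where the constraint graph is no longer bipartite and your good/bad-cycle dichotomy does not directly apply. One caution: the Observation you invoke asserts in general that $\xi_u = 0$ implies $\beta_C = \xi_b$, which is doubtful when bad cycles share edges (removing one shared edge can kill two bad cycles); but the only instance you need, namely $\xi_b = \xi_u = 0 \Rightarrow \beta_C = 0$, is solid and is best justified directly by spanning-tree propagation: fix a root value, propagate along tree edges, and goodness of the fundamental cycles guarantees every non-tree constraint is satisfied. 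The sign convention in your translation formula is off by a minus sign relative to the propagation direction, but since the condition is vanishing modulo $d$, this is inconsequential.
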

\begin{proof}
Let $G$ be a finite two-party XOR-d game for prime number of outputs $d$, corresponding to function $f(x,y)$ for input pairs $(x,y)$ and let $\omega_q(G) = 1$. By sharing a uniformly distributed random variable $r$ (specifically by local operations $a + r \; \text{mod d}$ and $b - r \; \text{mod d}$), the two parties Alice and Bob can obtain an optimal quantum strategy which has locally random outputs. Let this optimal quantum strategy be given by $|\psi \rangle \in \mathbb{C}^n \otimes \mathbb{C}^n$ the shared entangled state and $\{\Pi_{x}^{a}\}, \{ \Pi_{y}^{b}\}$ the projectors for inputs $(x,y)$ and outputs $(a,b)$. We have that for this optimal quantum strategy $P_q(a|x) = P_q(b|y) = \frac{1}{d}$ for all $a,x$ and $b,y$. This also implies due to the fact that the XOR-d game is a unique game, that for every input pair $(x,y)$ which has a positive probability in the game, i.e., $\pi(x,y) > 0$, we have
\[ P_q(a,b|x,y) = \begin{cases} 
      \frac{1}{d} & \textrm{ if $a + b$ mod d = $f(x,y)$} \\
      0 & \textrm{ otherwise} \\
   \end{cases} \]
   
Now, as in \cite{RAM} we consider the unitary operators defined as 
$A_x^{k} = \sum_{a=0}^{d-1} \zeta^{-a k} \Pi_{x}^a$ and $B_y^{l} = \sum_{b=0}^{d-1} \zeta^{-b l} \Pi_{y}^{b}$,   
so that we have
\begin{eqnarray}
P_q(a + b \; \text{mod d} = f(x,y) | x,y) = \sum_{k=0}^{d-1} \zeta^{k f(x,y)} \langle A_x^k \otimes B_y^k \rangle. \nonumber \\
\end{eqnarray}
Now, since $\omega_q(G) = 1$ for the game, the above value must equal unity. Putting the above facts together, we have that for every input pair $(x,y)$ with $\pi(x,y) > 0$, there is   
\[ \langle A_x^k \otimes B_y^l \rangle = \begin{cases} 
      \zeta^{-k f(x,y)} & \textrm{ if $k = l$} \\
      0 & \textrm{ otherwise} \\
   \end{cases} \]
Note that for the input pairs $(x,y)$ that do not appear in the game, there is no restriction on the probabilities in the optimal quantum strategy apart from the fact that the local probabilities for Alice and Bob are uniform.  

Now,  following \cite{Cleve} we construct an explicit deterministic (classical) strategy $a:\textsl{X} \rightarrow \{0, \dots, d-1\}$ and $b: \textsl{Y} \rightarrow \{0, \dots, d-1\}$ for Alice and Bob from the above quantum strategy. First, let us fix an orthonormal basis $\{|\phi_1 \rangle, \dots, |\phi_{n^2} \rangle \}$ for $\mathbb{C}^{n} \otimes \mathbb{C}^n$ with $|\phi_1 \rangle = |\psi \rangle$ and the other $|\phi_k\rangle$ chosen to satisfy the orthonormality. Let us define
\begin{eqnarray}
s(x) &:=& \min \{j \in \{2, \dots, n^2\}: \langle \psi | A_x \otimes \mathbf{1} | \phi_j \rangle \neq 0 \}, \nonumber \\ 
t(y) &:=& \min \{j \in \{2, \dots, n^2\}: \langle \psi | \mathbf{1} \otimes B_y^{\dagger} | \phi_j \rangle \neq 0 \}. 
\end{eqnarray}   
With $\lambda$ defined as
\begin{equation}
\lambda(z) =     {d-m+1} \; \text{mod d} \;   \textrm{if $\arg(z)  \in \left[\frac{2(m-1) \pi}{d} , \frac{2m \pi}{d} \right)$, $m \in [d]$ } 
\end{equation}   
we construct the deterministic strategy following \cite{Cleve} as
\begin{eqnarray}
\label{eq:classical-str}
a(x) &:=& \lambda \left(\langle \psi | A_x \otimes \mathbf{1} | \phi_{s(x)} \rangle \right) \nonumber \\
b(y) &:=& d - \lambda \left(\langle \psi | \mathbf{1} \otimes B_y^{\dagger} | \phi_{t(y)} \rangle \right) \; \; \text{mod d} \nonumber \\
\end{eqnarray}
To prove that this classical strategy achieves $\omega_c(G) = 1$ for the game, we have to show that for the quantum strategy these values of $a(x), b(y)$ achieve $P_q(a(x) , b(y) | x, y) = \frac{1}{d}$ when $\pi(x,y) > 0$ so that we have $a(x) + b(y) \; \text{mod d} = f(x,y)$. Evaluating this quantity, we obtain
\begin{equation}
P_q(a(x), b(y) | x,y) = \frac{1}{d^2} \sum_{k=0}^{d-1} \zeta^{k \left(a(x) + b(y) \right)} \langle A_x^k \otimes B_y^k \rangle. 
\end{equation}
Clearly, if $\zeta^{k \left(a(x) + b(y) \right)} \langle A_x^k \otimes B_y^k \rangle = 1$ for all $k$ we achieve $\omega_c(G) = 1$. Suppose by contradiction that $P_q(a(x), b(y) | x,y) = 0$ so that  $\zeta^{\left(a(x) + b(y) \right)} \langle A_x \otimes B_y \rangle = \zeta^t$ for some $t \neq 0$. 
Now, rewriting this by introducing the identity $\mathbf{1} = \sum_{j} | \phi_j \rangle \langle \phi_j |$ we have that
\begin{equation}
\zeta^{\left(a(x) + b(y) \right)} \langle A_x \otimes B_y \rangle = \sum_{j=1}^{n^2} \zeta^{\left(a(x) + b(y) \right)}  \langle \psi | A_x \otimes \mathbf{1} | \phi_j \rangle \langle \phi_j | \mathbf{1} \otimes B_y | \psi \rangle. 
\end{equation} 
Consider the above expression as an inner product of two unit vectors with entries $\zeta^{-a(x)} \langle \phi_j | A_x^{\dagger} \otimes \mathbf{1} | \psi \rangle$ and $\zeta^{b(y)}  \langle \phi_j | \mathbf{1} \otimes B_y | \psi \rangle$. The fact that these are unit vectors follows from $A_x, B_y$ being unitary operators and $\sum_j |\phi_j \rangle \langle \phi_j | = \mathbf{1}$.  We obtain that in order to have $\zeta^{\left(a(x) + b(y) \right)} \langle A_x \otimes B_y \rangle = \zeta^t$, we must have for all $j$
\begin{equation}
\label{eq:cond-noncl}
\zeta^{a(x)} \langle \psi | A_x \otimes \mathbf{1} | \phi_j \rangle = \zeta^{t - b(y)}  \langle \psi | \mathbf{1} \otimes B_y^{\dagger} | \phi_j \rangle.
\end{equation}
Now clearly we have $s(x) = t(y)$ since if $s(x) \neq t(y)$ then when $j$ equals the minimum of these two quantities, one side of the above equation is set to zero while the other is non-zero. But now we observe that for $j=s(x) = t(y)$ and $\arg \left(\langle \psi | A_x \otimes \mathbf{1} | \phi_j \rangle \right) \in [\frac{2(m-1) \pi}{d} , \frac{2m \pi}{d} )$ for some $m \in [d]$ there is
\begin{eqnarray}
&&\arg \left(\zeta^{a(x)} \langle \psi | A_x \otimes \mathbf{1} | \phi_j \rangle \right)  = \nonumber \\ 
&&\frac{2(d-m+1)\pi}{d} +\arg \left(\langle \psi | A_x \otimes \mathbf{1} | \phi_j \rangle \right) \in [0, 2 \pi/d). \nonumber \\ 
\end{eqnarray}
Similarly, for $\arg \left(\langle \psi | \mathbf{1} \otimes B_y^{\dagger} | \phi_j \rangle \right) \in [\frac{2(n-1) \pi}{d} , \frac{2n \pi}{d} )$ for some $n \in [d]$ there is $\arg \left(\zeta^{-b(y)} \langle \psi | \mathbf{1} \otimes B_y^{\dagger} | \phi_j \rangle \right) \in [0, 2 \pi/d)$ so that
Eq.(\ref{eq:cond-noncl}) cannot hold and we have obtained a contradiction. Therefore, we have that $P_q(a(x) , b(y) | x, y) = \frac{1}{d}$ for $\pi(x,y) > 0$ so that the classical strategy given in Eq.(\ref{eq:classical-str}) achieves $\omega_c(G) = 1$. 
\end{proof}

\subsection{Multi-party pseudo-telepathy}
For more than two party non-locality scenarios, the well-known GHZ paradoxes \cite{Mermin} show that it is possible to have XOR games corresponding to partial functions for which $\omega_q(G) = 1$ while $\omega_c(G) < 1$. Indeed, the GHZ paradoxes such as the Mermin inequality have been used in device-independent protocols for randomness amplification \cite{GM+12, BRGH+13} and randomness expansion \cite{MS}. While these involve $m$-party correlation functions, recently it has been of interest to consider Bell inequalities involving two-party correlation functions \cite{TASV+13} that are much easier to measure experimentally. 

As such, we extend the considerations of the previous subsection to the scenario of ``partial" XOR games that involve two-party correlation functions alone and investigate whether pseudo-telepathy is possible in this scenario. These are games for $m$ parties with inputs $(x_1, \dots, x_m)$ and outputs $(a_1, \dots, a_m)$. For each input combination with $\pi(x_1, \dots, x_m) > 0$, there exists a set of pairs $(k, l)$ of parties denoted $S_{(x_1, \dots, x_m)}$ on the XOR of whose outputs the winning constraint depends, i.e., we have that  $V(a_1, \dots, a_m|x_1, \dots, x_m) = 1$ if and only if $a_{k} \oplus a_{l} = f(x_{k}, x_{l})$ for all pairs $(k, l) \in S_{(x_1, \dots, x_m)}$. The Bell inequality thus involves only two-party correlation functions of the type $\langle A^{(k)}_{x_{k}} \otimes A^{(l)}_{x_{l}} \rangle$ where $A^{(i)}_{x_i}$ are observables for party $i$ and input $x_i$ with eigenvalues $\pm 1$. Note that this generalization to many parties is not strictly a unique game since some of parties are not required to output unique outcomes. 

\begin{theorem}
No $m$-party XOR game $G$ involving two-body correlators can be a pseudo-telepathy game, i.e., if $\omega_q(G) = 1$, then $\omega_c(G) = 1$.
\end{theorem}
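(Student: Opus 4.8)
The plan is to combine the rigidity of extremal two-body correlators with the signed-graph characterisation of classical satisfiability already established in Section~\ref{sec:bin-xor}. First I would record the consequence of $\omega_q(G) = 1$. Each party $i$ measures its input $x_i$ with a $\pm 1$-valued observable $A^{(i)}_{x_i}$, and winning with probability one means that for every input tuple with $\pi > 0$ and every pair $(k,l) \in S_{(x_1,\dots,x_m)}$ the constraint $a_k \oplus a_l = f(x_k,x_l)$ holds deterministically. Since the correlator equals $\mathbb{E}[(-1)^{a_k \oplus a_l}]$, this forces it to attain its extremal value, $\langle \psi | A^{(k)}_{x_k} \otimes A^{(l)}_{x_l} | \psi \rangle = (-1)^{f(x_k,x_l)}$, for the optimal quantum state $|\psi \rangle$.

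The key step is a rigidity argument. Writing $\hat{A}^{(i)}_{x_i}$ for $A^{(i)}_{x_i}$ tensored with identity on the remaining parties, the operator $M := \hat{A}^{(k)}_{x_k} \hat{A}^{(l)}_{x_l}$ is Hermitian with spectrum in $\{-1,+1\}$, because its two factors act on distinct parties and therefore commute. As its expectation attains the extremal value $(-1)^{f(x_k,x_l)}$, the positive operator $\id - (-1)^{f(x_k,x_l)} M$ has vanishing expectation in $|\psi\rangle$ and hence annihilates it, so that $M |\psi\rangle = (-1)^{f(x_k,x_l)} |\psi\rangle$. Thus $|\psi\rangle$ is a common eigenvector of every two-body constraint operator, with the sign of the eigenvalue dictated by $f$.

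I would then model the constraints as a signed graph whose vertices are the (party, input) pairs that actually occur in the game and whose edges are the constrained pairs $(k,l) \in S$, each labelled by the parity of $f(x_k,x_l)$; because $f(x_k,x_l)$ depends only on the two inputs, each edge carries a single well-defined sign. For any cycle $u_1 - u_2 - \dots - u_t - u_1$ of constraints I would form the product $M_1 M_2 \cdots M_t$ of the associated operators, ordered so that each shared vertex operator occupies two adjacent positions; consecutive occurrences cancel through $(\hat{A}_{u})^2 = \id$, so the product telescopes to $\id$ purely by adjacency. Applying the same product to $|\psi\rangle$ and peeling off the eigenvalue relation one factor at a time from the right gives $(-1)^{\sum_i f_i} |\psi\rangle$, whence $\sum_i f_i \equiv 0 \pmod 2$ and every cycle is balanced. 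By the Harary characterisation of balanced signed graphs quoted earlier (\cite{Har}), the vertices then admit a two-colouring consistent with all edge signs; read off as a deterministic assignment $a_k(x_k)$ it satisfies every winning constraint, so $\omega_c(G) = 1$.

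I expect the main obstacle to be exactly the feature that separates this multi-party setting from the two-party proof: distinct inputs of a single party yield non-commuting observables, so the various $M$'s cannot be simultaneously diagonalised. The resolution, and the crux of the argument, is that the telescoping of the cyclic operator product relies on adjacency alone, while the eigenvalue extraction uses only that $|\psi\rangle$ is an eigenvector of each $M$ separately; neither step requires the $M$'s to commute with one another. A minor point still to check is that no degenerate two-cycle with conflicting labels arises, which is immediate since $f$ is a fixed function of the inputs and hence assigns a unique sign to each edge.
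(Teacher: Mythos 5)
Your proof is correct, but it takes a genuinely different route from the paper's. The paper proves this theorem by the same Cleve--Hoyer--Toner--Watrous rounding technique it uses for its two-party theorem: after symmetrizing to uniform local marginals, it fixes an orthonormal basis $\{|\phi_j\rangle\}$ with $|\phi_1\rangle = |\psi\rangle$, lets $s_{(i)}(x_i)$ be the first index $j \geq 2$ at which $\langle \psi | \mathbf{1}^{\otimes i-1} \otimes A^{(i)}_{x_i} \otimes \mathbf{1}^{\otimes m-i} | \phi_j \rangle \neq 0$, rounds the phase of that amplitude to a deterministic bit $a_{(i)}(x_i)$, and derives a contradiction from assuming some constraint is violated by the rounded strategy (the violation forces $s_{(k)}(x_k) = s_{(l)}(x_l)$ together with an impossible phase equation). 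You instead prove a rigidity statement --- each constraint operator $M = \hat{A}^{(k)}_{x_k}\hat{A}^{(l)}_{x_l}$ is a Hermitian involution with extremal expectation, so $|\psi\rangle$ is a joint eigenvector with eigenvalue $(-1)^{f(x_k,x_l)}$ --- then show every cycle of constraints is balanced because the ordered product of the $M$'s around a cycle telescopes to $\mathbb{I}$ (each single-site observable occurs exactly twice, in adjacent positions, so no commutation between distinct inputs of the same party is ever invoked), and you close with Harary's characterization of balanced signed graphs, which the paper itself quotes as a Fact in Section \ref{sec:bin-xor}. Both arguments are sound; yours even dispenses with the paper's preliminary randomization step, since uniformity of the marginals plays no role in it. What your route buys is structural insight: it shows that $\omega_q(G)=1$ forces the signed constraint graph to be balanced, i.e. $\beta_C(G,K)=0$, tying the pseudo-telepathy question directly to the switching/balance framework of Sections \ref{sec:labeled-eq} and \ref{sec:bin-xor}, and it makes transparent exactly why GHZ-type paradoxes with three-body correlators escape the theorem (their operator products do not telescope, since anticommuting observables of a single party can then meet). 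What the paper's route buys is uniformity of method: the same rounding construction also handles its two-party XOR-d theorem for arbitrary prime $d$, where your sign-based argument would need a nontrivial gain-graph, root-of-unity analogue because the relevant operators there are unitaries rather than Hermitian involutions.
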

\begin{proof}
The proof follows similarly to that of the previous theorem. Let $G$ be an $m$-party binary outcome XOR game involving two-body correlation functions and having $\omega_q(G) = 1$. As in the previous theorem, the optimal quantum strategy given by the shared entangled state $|\psi \rangle \in \otimes_{i=1}^{m} \mathbb{C}^n $ and projectors $\{\Pi_{x_{i}}^{a_i}\}$ gives uniform outcomes for each input and each party (obtained for example by each party adding a uniformly distributed $r$ to their outcome), i.e., $P_q(a_i|x_i) = \frac{1}{2}$ for all $a_i,x_i$. While this generalization to many parties is not strictly a unique game so that we cannot precisely identify the non-zero probabilities, we still have for each set of inputs $(x_1, \dots, x_m)$ with $\pi(x_1, \dots, x_m) > 0$ that $\langle A^{(k)}_{x_{k}} \otimes A^{(l)}_{x_{l}} \rangle = (-1)^{f(x_{k}, x_{l})}$ for all pairs of inputs $(k,l) \in S_{(x_1, \dots, x_m)}$. Here, note that $A^{(j)}_{x_j}$ are hermitian operators given by $A^{(j)}_{x_j} = \sum_{a_{j} = 0,1} (-1)^{a_{j}} \Pi_{x_{j}}^{a_{j}}$. This gives that $P_q(a_{k}, a_{l} | x_k, x_l) = \frac{1}{2}$ when $a_{k} \oplus a_{l} = f(x_k, x_l)$ and is $0$ otherwise.

Now, as before following \cite{Cleve} we construct an explicit deterministic (classical) strategy $a_{(i)} :\textsl{X}_i \rightarrow \{0, 1\}$ from the above quantum strategy. We fix an orthonormal basis $\{|\phi_1 \rangle, \dots, |\phi_{n^m} \rangle \}$ for $\otimes_{i=1}^{m} \mathbb{C}^{n}$ with $|\phi_1 \rangle = |\psi \rangle$ and the other $|\phi_k\rangle$ chosen to satisfy the orthonormality. We have
\begin{eqnarray}
s_{(i)}(x_i) &:=& \min \{j \in \{2, \dots, n^m\}: \nonumber \\
&& \; \langle \psi | \mathbf{1}^{\otimes i-1} \otimes A^{(i)}_{x_i} \otimes \mathbf{1}^{\otimes m-i} | \phi_j \rangle \neq 0 \}. 
\end{eqnarray}   
$\lambda$ is now defined as $\lambda(z) = (-1)^k$ if $\arg(z) \in [k \pi , (k+1) \pi)$ for $k = 0, 1$
and the deterministic strategy is given for each party $i \in [m]$ by
$(-1)^{a_{(i)}(x_i)} := \lambda \left(\langle \psi | \mathbf{1}^{\otimes i-1} \otimes A^{(i)}_{x_i} \otimes \mathbf{1}^{\otimes m-i} | \phi_{s_{(i)}(x_i)} \rangle \right)$. 
To prove that this classical strategy achieves $\omega_c(G) = 1$, we check that for the quantum strategy these values of $a_{(i)}(x_i)$ achieve $P_q(a_{(k)}(x_{k}), a_{(l)}(x_{l}) | x_k, x_l) = \frac{1}{2}$ for the $(k,l) \in S_{(x_1, \dots, x_m)}$ when $\pi(x_1, \dots, x_m) > 0$. Evaluating this quantity, we get
\begin{eqnarray}
&&P_q(a_{(k)}(x_{k}), a_{(l)}(x_{l}) | x_k, x_l) = \nonumber \\
&&\; \frac{1}{4} \left(1 + (-1)^{\left(a_{(k)}(x_{k}) \oplus a_{(l)}(x_{l}) \right)} \langle A^{(k)}_{x_{k}} \otimes A^{(l)}_{x_{l}} \rangle \right).
\end{eqnarray}
Suppose by contradiction that  $(-1)^{\left(a_{(k)}(x_{k}) \oplus a_{(l)}(x_{l}) \right)} \langle A^{(k)}_{x_{k}} \otimes  A^{(l)}_{x_{l}} \rangle = -1$. 
Now, rewriting this by introducing the identity $\mathbf{1} = \sum_{j} | \phi_j \rangle \langle \phi_j |$ we have that
\begin{eqnarray}
&&(-1)^{\left(a_{(k)}(x_{k}) \oplus a_{(l)}(x_{l}) \right)} \langle A^{(k)}_{x_{k}} \otimes A^{(l)}_{x_{l}} \rangle = \nonumber \\
&&\sum_{j=1}^{n^m} (-1)^{\left(a_{(k)}(x_{k}) \oplus a_{(l)}(x_{l}) \right)} \langle \psi | \mathbf{1}^{\otimes k - 1} \otimes A^{(k)}_{x_{k}} \otimes \mathbf{1}^{m-k} | \phi_j \rangle \nonumber \\
&& \qquad \qquad \langle \phi_j | \mathbf{1}^{\otimes l - 1} \otimes A^{(l)}_{x_{l}} \otimes \mathbf{1}^{\otimes m-l}  | \psi \rangle. 
\end{eqnarray} 
Consider the above expression as an inner product of two unit vectors with entries $(-1)^{a_{(k)}(x_{k})} \langle \phi_j | \mathbf{1}^{\otimes k - 1}  \otimes A^{(k)}_{x_{k}} \otimes \mathbf{1}^{\otimes m-k} |  \psi \rangle$ and $(-1)^{a_{(l)}(x_{l})} \langle \phi_j | \mathbf{1}^{\otimes l - 1} \otimes A^{(l)}_{x_{l}} \otimes \mathbf{1}^{\otimes m-l}  | \psi \rangle$, we must have for all $j$
\begin{eqnarray}
\label{eq:cond-noncl-2}
&&(-1)^{a_{(k)}(x_{k})} \langle \psi | \mathbf{1}^{\otimes k - 1}  \otimes A^{(k)}_{x_{k}} \otimes \mathbf{1}^{\otimes m-k} | \phi_j \rangle = \nonumber \\ &&(-1)^{a_{(l)}(x_{l}) \oplus 1} \langle \psi | \mathbf{1}^{\otimes l - 1} \otimes A^{(l)}_{x_{l}} \otimes \mathbf{1}^{\otimes m-l}  | \phi_j \rangle.
\end{eqnarray}
Now clearly we have $s_{(k)}(x_{k}) = s_{(l)}(x_{l})$ since if $s_{(k)}(x_{k}) \neq s_{(l)}(x_{l})$ then when $j$ equals the minimum of these two quantities, one side of the above equation is set to zero while the other is non-zero. But now we observe that for $j = s_{(k)}(x_{k}) = s_{(l)}(x_{l})$, we have $\arg\left((-1)^{a_{(k)}(x_{k})} \langle \psi | \mathbf{1}^{\otimes k - 1}  \otimes A^{(k)}_{x_{k}} \otimes \mathbf{1}^{\otimes m-k} | \phi_j \rangle \right) \in [0, \pi)$ as well as $\arg \left( (-1)^{a_{(l)}(x_{l})} \langle \psi | \mathbf{1}^{\otimes l - 1} \otimes A^{(l)}_{x_{l}} \otimes \mathbf{1}^{\otimes m-l}  | \phi_j \rangle \right) \in [0, \pi)$ so that Eq.(\ref{eq:cond-noncl-2}) cannot hold and we have obtained a contradiction.
\end{proof}

\section{Explicit examples and Numerical Results}
\label{sec:Comparing-cl-qn}
In this section we provide classical, the almost quantum value (denoted by $\gamma_{3/2}$) \cite{navascues-2008-10} and the quantum values for small XOR-d games in both the single-party contextuality and two-party Bell scenario.  We choose one graph from each equivalence class, since the equivalence relation preserves the classical, quantum and super-quantum values of the game.


One interesting sub-class of games is those that have no quantum advantage. Since $\gamma_{C}(G)\leq\gamma_{Q}(G)\leq\gamma_{3/2}(G)$, it follows that any game $G$ with $\gamma_{C}(G)=\gamma_{3/2}(G)$ has $\gamma_{Q}(G)=\gamma_{C}(G)$. Interestingly, we find explicit examples of games where it happens that $\gamma_{Q}(G)=\gamma_{C}(G)$ even though $\gamma_{3/2}(G) >\gamma_{C}(G)$. This makes use of a construction of a joint probability distribution from \cite{RSKK} where it was shown that all chordal graphs, i.e., graphs containing no chordless cycles of length $4$ or more, have $\gamma_{Q}(G)=\gamma_{C}(G)$. Finally, as we have seen in Section \ref{sec:Lovas} the Lov\'{a}sz $\theta$ function of the orthogonality graph representing the game also gives an upper bound on the quantum value that is in general worse than $\gamma_{3/2}(G)$. Clearly, if $\theta(\Gamma(G))=\gamma_{C}(G),$ then $\gamma_{Q}(G)=\gamma_{C}(G).$ 

\subsection{Single-party contextuality XOR games}
\subsubsection{Single color XOR  games}

First, we present the results for the single color XOR games from Section \ref{subsec:single-color-XOR}. We only consider games on connected graphs in which all vertices have degree at least $2$, since for any graph $G=(V,E)$ containing a vertex $v$ of degree $1$ both classical and quantum values are equal to the value for the graph $G'=(V\setminus \{v\},E \setminus \{e\})$ plus $1$ where $e$ is the only edge incident with $v$ in $G$. Since all such graphs with four vertices are classical, we begin with graphs which have five vertices.

\paragraph{Single color XOR games with $5$ vertices}

The only five vertex graphs for which $\gamma_{3/2}\neq \gamma_{C}$ are the cycle $C_5$ ($\gamma_{C}=4, \gamma_{3/2}\approx 4.472$) 
and the complete graph $K_5$ ($\gamma_{C}=6, \gamma_{3/2} \approx 6.25$).
Since the classical and quantum values of any complete graph must be equal, this means $C_5$ is the only graph for which $\gamma_{Q}\neq\gamma_{C}.$


\paragraph{Single color XOR games with $6$ vertices}

Out of the 61 non-isomorphic graphs with six vertices, four have $\gamma_{3/2}>\gamma_{C}$, see Fig. \ref{fig:1kol6}.

\begin{center}
\begin{figure}[h!]
		\includegraphics[width=0.4\textwidth]{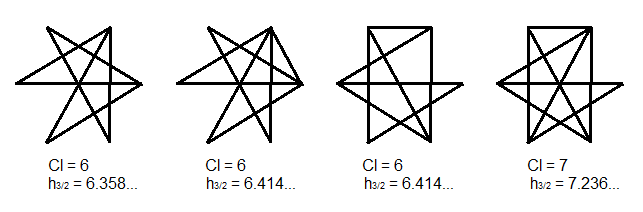}
\caption{The only non-equivalent single-color graphs with 6 vertices for which the classical value is not equal to $\gamma_{3/2}$. Note that all the edges here denote anti-correlations, for simplicity, the dashed edges have been replaced by solid ones.}
	\label{fig:1kol6}
\end{figure}
\end{center}

\paragraph{Single color XOR games with $7$ vertices}
Out of 507 analyzed graphs, 54 have $\gamma_{3/2}>\gamma_{C}$. For four out of these $\gamma_{3/2}=\gamma_{C}+0.25.$ All edges in those graphs lie in cliques of size $3$ or more, and we construct an explicit joint probability distribution following \cite{RSKK} which implies that $\gamma_{Q}=\gamma_{C}$. 
It is important to note, that in Fig. \ref{fig:1kol6} below we use colors for different purpose than in other figures, that is not to depict one of the 3 kinds of permutations as in Fig. \ref{fig:3colors}, but to visualize certain subgraphs of a given graph.

\begin{center}
\begin{figure}[h!]
		\includegraphics[width=0.4\textwidth]{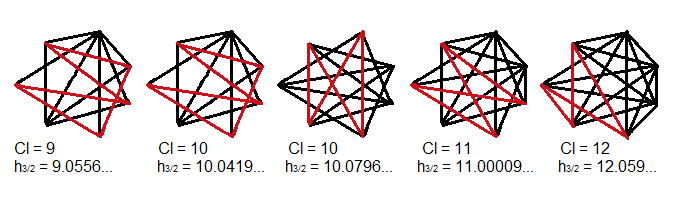}
		\includegraphics[width=0.4\textwidth]{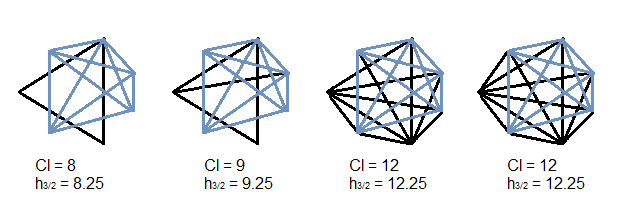}
\caption{Colors on these figures are used only in order to visualize certain subgraphs. Figure
(a) depicts non-equivalent single color graphs, with very small difference between $\gamma_{3/2}$ and $\gamma_{C}$ compared to the 
typical case. All these graphs contain a chordless cycle of length at least 4, marked in red. Figure (b) depicts all the non-equivalent single-color graphs with 7 vertices for which $\gamma_{3/2}=\gamma_{C}+0.25$. Note that all these graphs are chordal, and admit a joint probability distribution so that the $\gamma_{Q} = \gamma_{C}$ for these graphs. They also all contain $K_5$ as an induced subgraph, marked in blue.}
	\label{fig:1kol6}
\end{figure}
\end{center}

\subsubsection{Two and three color XOR-3 games}
\label{subsec:2and3colors}
We have calculated the classical and almost quantum values for all equivalence classes of 3 color (XOR-3) games defined by small connected graphs without vertices of degree 1. Adding such a vertex to any graph $G$ simply increases both classical and quantum values by 1, since the additional constraint is always satisfiable. Every XOR-3 game graph with five or less vertices for which the values are different is equivalent to one of the graphs in Fig \ref{fig:3koldo5w}.

\begin{figure}[h!]
	\includegraphics[width=0.4\textwidth]{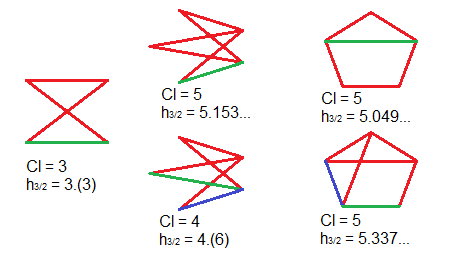}
	\caption{All non-equivalent graphs on 5 or less vertices with no vertices of degree 1 in which $\gamma_{3/2}\neq\gamma_{Cl}$. Bipartite graphs (the first three) represent Bell's inequalities, others correspond to contextual games.}
	\label{fig:3koldo5w}
\end{figure}

\subsection{Two-party XOR-3 Bell inequalities}
\subsubsection{Total function ternary input XOR-3 games}
Every bipartite 3 color (XOR-3) game on six vertices for which the $\gamma_{3/2}$ value is higher than classical is equivalent to one of the graphs in Fig \ref{fig:3kol6wB}. In this case, we have also calculated the quantum value by optimizing over two-qutrit states $\sum_{i, j = 0}^{2} \alpha_{i,j} | i, j \rangle$ and observables. In each case of ternary input-output Bell inequalities, except the CHSH-3 scenario considered in \cite{BavarianShor, Liang} we find that the quantum value calculated for qutrits matches (up to numerical precision) the almost quantum value of the SDP hierarchy \cite{navascues-2008-10}. 

\begin{figure}[h]
	\includegraphics[width=0.40\textwidth]{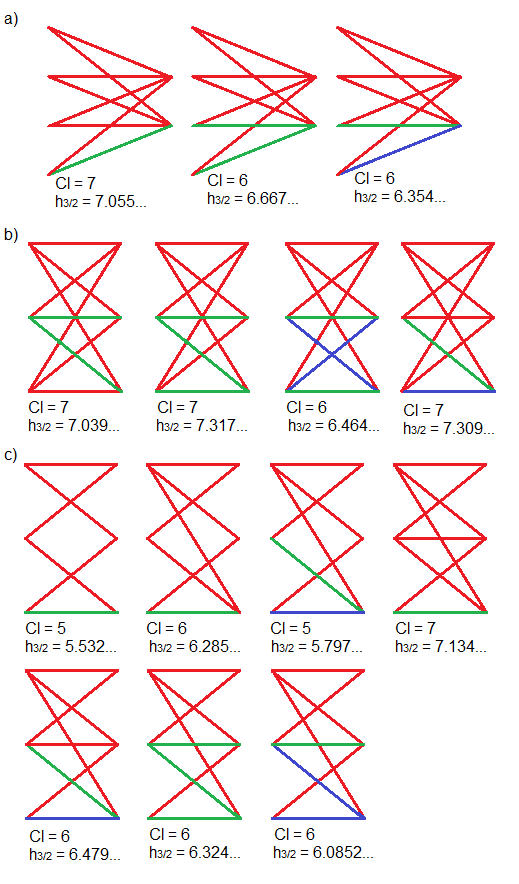}
	\caption{All non-equivalent bipartite graphs with 6 vertices (and no vertices of degree 1) with $\gamma_{3/2}\neq \gamma_{C}$. In each of these cases, an optimization over two-qutrit states and observables shows \\a) $K_{4,2}$ Bell's inequalities \\b) $K_{3,3}$ Bell's inequalities. Note that the third graph from left falls into the CHSH-d class of Bell inequalities considered in \cite{BavarianShor, Liang}. \\c) Other graphs.\\\textit{In each of the games in (b), (c) except the CHSH-3 game, an optimization over two-qutrit states and observables shows that the quantum value is in fact equal to the $\gamma_{3/2}$ value up to numerical precision}.}
	\label{fig:3kol6wB}
\end{figure}

\subsubsection{Partial function ternary input XOR-3 games}

We have also calculated classical and $\gamma_{3/2}$ values for some small (5 vertices and bipartite with 6 vertices) XOR-3 game graphs with uncolored edges, i.e., those corresponding to partial functions. 
We conjecture that these are the only 3-colored graphs with uncolored edges for which classical and quantum values may be different. Fig. \ref{fig:grayL3} depicts all possibly nonclassical classes of 3-colored graphs with 5 vertices, and bipartite graphs with 6 vertices, in which every vertex is incident to at least two colored edges. 


\begin{figure}[h]
	\includegraphics[width=0.40\textwidth]{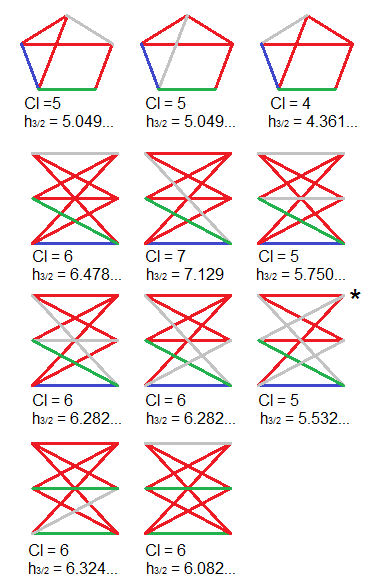}
	\caption{Three-colored XOR-d graphs with uncolored edges such that $\gamma_{3/2}\neq\gamma_{C}$. Note, like in the two-colored case, that the set includes only one chain (marked with "*").}
	\label{fig:grayL3}
\end{figure}

Note that the set only includes one chain (i.e. the graph $K_{3,3}$ in which only a 6-cycle is colored). 
All labelings of $C_6$ with three colors are equivalent to either $K(e)=(01)$ for all $e\in E$ (good) or $K(e_1)=(12)$ and $K(e)=(01)$ for $e\in E-\{e_1\}.$ (Interestingly, this is not necessarily the case for 4 and more colors) Thus, all labelings of the same graph with 2 or three colors in which only a 6-cycle is colored must also form exactly two equivalence classes. As explained in the beginning of this section, vertices of degree 1 do not matter in graphs for total function games. However, even though we do not count uncolored edges as constraints, vertices incident to one or more uncolored edge and only one colored edge do need to be considered. If $v\in V(G)$ is incident to only one colored edge, the classical value of the game is equal to $\gamma_{C}(G-\{v\},K)+1$ The quantum value, however, may differ from $\gamma_{Q}(G-\{v\})+1.$ An example is presented in Fig. \ref{fig:legs}.

\section{Conclusions}
We have studied the generalization of XOR games to arbitrary number of outcomes known as linear games for prime or prime power outputs or more generally called XOR-d games. We first abstracted two paradigmatic properties of the XOR games and showed that for odd values of $d$, the unique class of games that obey these two properties were the earlier studied class of linear games. In both the contextuality and non-locality scenarios, we introduced a graph-theoretical description of these games in terms of edge labelings with colors representing different permutations. There followed a natural relation between equivalent classes of games and the graph-theoretic notion of switching equivalence and signed graphs. We also studied the classical value of these games in terms of graph-theoretic parameters. In particular, computing the classical value of single-party anti-correlation XOR games was related to finding the edge bipartization number of a graph, which is known to be MaxSNP hard. Computing the classical value of more general XOR-d games was related to the identification of specific bad and ugly cycles in the graph. Studying classical value can be done in many ways, in particular here we have studied it via {\it three} types of cycles in a graph - the so called good cycles which satisfy all vertex assignments, the bad cycles for which no assignment leads to satisfiability, and interestingly {\it the ugly} ones, which makes the problem of satisfaction difficult, as they satisfy some but not all vertex assignments. Another graph theoretical tool is the graph $KG$ - a permutation graph of the game graph $G$. This tool will be heavily used in \cite{RS}, here we showed that it allows for testing whether a given graph corresponds to a game that can be won with probability $1$ using classical resources. 

We also studied the quantum value of these games using the Lov\'{a}sz theta number of the corresponding orthogonality graph. We show how the constraint graph representing the game can be used to construct the orthogonality graph and find that its Lov\'{a}sz theta number still gives only an upper bound on the quantum value even for single-party contextuality XOR-d games. An important property of the XOR-d game Bell inequalities is that for these, an optimal quantum strategy can be found for which the outcomes of each party are uniformly distributed and correlated. This makes these games ideal candidates for device-independent applications. Indeed XOR games, in particular the Braunstein-Caves chained Bell inequalities have found widespread use in such tasks. We showed that for both partial and total functions, no finite XOR-d game (for prime number of outcomes) exhibits the property of pseudo-telepathy, i.e., maximum algebraic violation of such Bell inequalities cannot be obtained in quantum theory. We also extended the result to multi-party "partial" XOR games which involve only two-body correlation functions, showing that such Bell inequalities cannot achieve algebraic violation. 

An interesting question is to develop this framework to get more analytical bounds such as in \cite{BavarianShor}. It would also be important to study more general unique games using a similar approach. Given that finite XOR-d games do not exhibit pseudo-telepathy, an important open question is whether the chained Bell inequalities and their generalization to many outcomes are the class of XOR-d games that exhibit the best asymptotic rate of convergence of the quantum value to unity. Numerical studies for small size games indicates that apart from the CHSH-d games considered earlier, the quantum value for ternary output games is achieved at the level $1+AB$ of the SDP hierarchy from \cite{navascues-2008-10}. It would be interesting to investigate whether a sub-class of the XOR-d games can be proved to achieve optimality at particular intermediate levels of the hierarchy. 
%
%




{\em Acknowledgements.} 
We acknowledge useful discussions with Ryszard Horodecki and Wojciech Wantka. This work is supported by the EC IP QESSENCE, ERC AdG QOLAPS, EU grant RAQUEL and the Foundation for Polish Science TEAM project co-financed by the EU European Regional Development Fund. Simone Severini is supported by the Royal Society and EPSRC.

\end{document}